\newtheorem{theorem}{Theorem}
\newtheorem{assumption}{Assumption}
\newtheorem{remark}{Remark}
\newcommand{\tr}{\mathrm{tr}}
\newcommand{\Pa}{\mathrm{P_1}}
\newcommand{\Pb}{\mathrm{P_2}}
\newcommand{\dx}{\mathrm{d}x}
\newcommand{\dt}{\mathrm{d}t}
\newcommand{\ds}{\mathrm{d}s}
\newcommand{\dW}{\mathrm{d}W}
\newcommand{\de}{\mathrm{d}e}
\newcommand{\dz}{\mathrm{d}z}
\newcommand{\calM}{\mathcal{M}}
\renewcommand{\Re}{\mathbb{R}}
\newcommand{\bbE}{\mathbb E}
\newcommand{\E}{\mathbb E}
\newcommand{\N}{\mathcal N}
\newcommand{\R}{\mathbb{R}}
\newtheorem{theorem*}{Theorem}[section]
\newtheorem{proposition}{Proposition}[section]
\newtheorem{example}{Example}
\title{\huge Communication and Control Co-design \\ in Non-cooperative Games}
\author{Shubham~Aggarwal, Tamer~Ba{\c s}ar, and Dipankar~Maity 
\thanks{Research of the authors was supported by the ARL grant ARL DCIST CRA W911NF-17-2-0181. Research of the first two authors was supported by the AFOSR grant 476795-239015-191100. 
}
%
\thanks{
Shubham Aggarwal is with the Department of Mechanical Science \& Engineering and the Coordinated Science Laboratory at the University of Illinois Urbana-Champaign (UIUC), Urbana, IL, USA-61801; Tamer Ba{\c s}ar is with the Department of Electrical \& Computer Engineering and the Coordinated Science Laboratory at UIUC;
Dipankar Maity is with Department of Electrical and Computer Engineering at the University of North Carolina at Charlotte, Charlotte, NC, USA-28223.
(Emails:\texttt{\{sa57, basar1\}@illinois.edu, dmaity@charlotte.edu)}
}
}
\date{}
\begin{document}

\maketitle
\thispagestyle{empty}
\begin{abstract}
In this article, we revisit a communication-control co-design problem for a class of two-player stochastic differential games on an infinite horizon. Each `player' represents two active decision makers, namely a \textit{scheduler} and a \textit{remote controller}, which cooperate to optimize over a global objective while competing with the other player. 
Each player's scheduler can only \textit{intermittently} relay state information to its respective controller due to associated cost/constraint to communication.
The scheduler's policy determines the \textit{information structure} at the controller, thereby affecting the quality of the control inputs.
Consequently, it leads to the classical \textit{communication-control trade-off} problem.
A high communication frequency improves the control performance of the player on account of a higher communication cost, and vice versa.
Under suitable information structures of the players, we first compute the Nash controller policies for both players in terms of the conditional estimate of the state. 
Consequently, we reformulate the problem of computing Nash scheduler policies (within a class of parametrized randomized policies) into solving for the steady-state solution of a generalized Sylvester equation. Since the above-mentioned reformulation involves infinite sum of powers of the policy parameters, we provide a projected gradient descent-based algorithm to numerically compute a Nash equilibrium using a truncated polynomial approximation. Finally, we demonstrate the performance of the Nash control and scheduler policies using extensive numerical simulations.
\end{abstract}


\section{Introduction}
The topic of ``games among teams'' has been a widely addressed one due to its ubiquitous applicability in diverse systems. Each team within this setup is comprised of multiple decision makers (DMs), which collectively aim to satisfy a joint cooperative objective while playing in cooperation or competition with other teams. A state-of-the-art application of the cooperative setup pertains to cognitive radio networks \cite{gao2024attention}, where players collaboratively perform joint spectrum sensing followed by decentralized spectrum access. Another application within the cooperative setup is that of decentralized federated learning \cite{meng2023gnn}  where players exchange weights of a neural network over multiple rounds for collaborative training without central coordination. Within the non-cooperative setup, a classic example is that of the perimeter monitoring problem \cite{shishika2020review} in between two teams, where the objective of a defender team is to defend a target of interest against an attacker team. Another application of such a setup is that in network security games \cite{alpcan2010network} where the security firewall (the defender) is responsible for neutralizing any attacks from a malicious attacker. Other important applications include oligopoly markets, smart and connected cities, field robotics, and the like \cite{weintraub2008markov,hatanaka2015passivity,chi2021game}.

In this paper, we consider a two-player differential game problem, where each player\footnote{In the sequel, we use the terms player and team interchangeably.} is essentially a wireless control system representing two decision makers, namely, a scheduler and a controller (see Fig. \ref{Fig:system_model}).
The scheduler accesses the state of the game and communicates it to its remote controller counterpart to \textit{exert a control action}. 
Each communication as well as a control action incur associated costs. 
Thus, each player is faced with the two objectives of deciding when to control and when to communicate. Increased communication between the scheduler and the controller leads to a better estimate of the state at the controller, which improves control performance, but at the expense of increased communication cost. On the other hand, a lower communication frequency hurts control performance whilst potentially saving on the communication cost. 
This gives rise to the above-mentioned communication-control tradeoff for each player. 

The distributed actions of both players in controlling the state of the game couples the two players' optimization problems, and hence, Nash equilibrium is an appropriate solution concept within this two-player setting. Our objective in this work is, thus, to find Nash controller and Nash scheduler policies for the players. Under suitable information structures, we first compute closed-form expressions for the Nash controller policies in terms of the conditional estimates of the state of the game at each player. Subsequently, we re-express the problem of obtaining the Nash scheduler policies (using a class of parametrized randomized policies) into solving for the steady-state solution of a generalized Sylvester-type equation. Since the same leads to a Neumann series involving a sum of infinite powers of the policy parameters, we first provide an approximate solution to the same using truncated polynomial expressions. Consequently, we provide a projected gradient-descent type algorithm to compute a Nash equilibrium solution for the scheduler of each player.

\subsection{Related Work}
\begin{figure}[t]
	\centering
	\includegraphics[width=0.75 \columnwidth ]{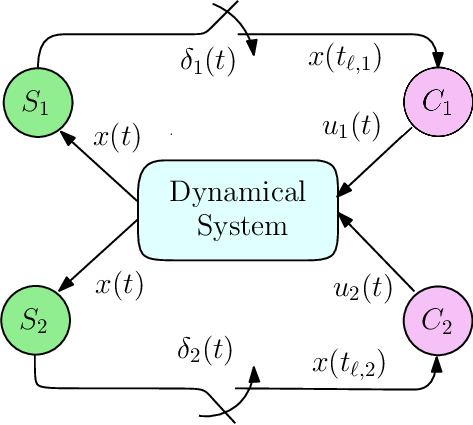}
	\caption{\small{Schematic of a two-player wireless system exerting control on the state of a dynamical system.
    	}}
	\label{Fig:system_model}
\end{figure}
\subsubsection{Networked Control systems}
Networked control systems (NCSs) form a special instance of the games-among-teams setting constituting only one player, equipped with a \textit{controller} and a \textit{scheduler} \cite{zhang2019networked}. 
Existing literature on NCSs focus on two perspectives, namely those of event-triggered control and optimal control. Within the event-triggered framework \cite{ demirel2016trade,peng2018survey,yi2018dynamic,heemels2021event}, the key objective is that of system stabilization, and is based on pre-defined scheduler policies followed by a Lyapunov-type analysis. Extensions to cooperative multi-agent NCSs include \cite{li2020consensus,xu2023event}, among many others (see also, references \cite{wangsurvey,masonmulti,liu2024event}). Evidently, such designs lack any optimality guarantees.

On the other hand, optimal control of the NCSs has also been widely addressed in literature \cite{imer2005optimal,imer2010optimal,lipsa2011remote,imer2006optimal,imer2006measure,molin2012optimality,molin2016optimality,maity2020minimal} to determine \textit{optimal} controller and scheduler policies.
Specifically, papers \cite{imer2005optimal} and \cite{imer2010optimal} have obtained the optimal sensing policies within the class of threshold-type (symmetric) policies under a hard constraint on the number of communications. In a subsequent work \cite{lipsa2011remote}, the former policy space was proven to be optimal within a general class of policies, for scalar systems, and with a penalty rather than a hard constraint on the communications.
Later, \cite{imer2006optimal} derived optimal policies under resource constraints on the interaction between the controller and the actuator in a plant, and \cite{imer2006measure} introduced a tradeoff between two options for a controller, which are transmission of control signals to the actuator and receiving measurements, where optimality is again based on threshold-type policies. The works 
\cite{molin2012optimality,molin2016optimality} provided jointly optimal policies for control and sensing for linear multivariate systems based on the estimation error while \cite{maity2020minimal} derived suboptimal sensor and control policies based on the statistics of the estimation error.
The latest works in this direction for single player systems are \cite{soleymani2021value,soleymani2022value}, where the authors show that for a multi-dimensional Gauss-Markov process, the optimal estimator is linear and is unaffected by the \textit{no-communication events} (as was derived for \textit{scalar} systems earlier in \cite{imer2010optimal} with its optimality proven in \cite{lipsa2011remote}), leading to a globally optimal policy design for a single team problem. The scheduling policy expression, is however, provided in terms of the conditional value function of the state, which cannot be computed exactly.

\subsubsection{Networked games} 
Earlier works within this domain addressed the design of controller policies for a pre-specified set of scheduling instants (also referred to as sampled-data information structures) for both deterministic and stochastic systems, and for both zero-sum and general sum games \cite{bacsar1977informationally,bacsar1991game,bacsar1991optimum,bacsar1995minimax,bacsar2005existence}.
Following the lines of work on event-triggered control, the work \cite{yuan2017event} considered a similar problem for quadratic games, with \textit{pre-defined} sensing policies for the players.
The first work discussing the concurrent design of controller-scheduler policies for zero-sum linear quadratic (LQ) games appears to be \cite{maity2016strategies}. Under the setting, the controllers of both players have only sampled (and synchronous) state access, and the authors derive a guaranteed performance bound for the underlying objective function. Equilibrium scheduling policy design was, however, left open. 
These results on controller policies were improved in \cite{maity2016optimal} to derive saddle-point control strategies for the case where information could leak between players. This caused the underlying scheduling policy design problem to become a joint optimization problem between players (instead of a game) and resulted in a \textit{cooperative} scheduling strategy.

Results on asynchronous scheduling problems, where both players could possibly have different scheduling instances can be found in \cite{maity2017linear}. The authors derive equilibrium controller strategies for a 2--player game, but the equilibrium scheduling policy computation turned out to be intractable. This framework was also adopted in \cite{huang2021defending} for asset defending applications where the players have a decoupled dynamics, thereby turning out to be a special case of the setup of \cite{maity2017linear}.
Due to this decoupling, the scheduling strategy computation becomes an optimization problem similar to what was proposed in \cite{maity2016strategies}. 
One of the latest works \cite{aggarwal2024linear} considers a linear quadratic zero-sum differential game where one player has continuous while the other player has intermittent state access. Within the same, the authors derive age-of-information based scheduling policy for the player at a disadvantage, which serves as its security strategy, under the assumption that the other player is oblivious to the sensing limitation of the disadvantaged player. Computing equilibrium strategies in such a setting is, however, still an open problem. Additionally, another work worth mentioning is \cite{aggarwal2024semantic} where the authors consider a `large' player system, and compute \textit{approximate} Nash equilibrium control and scheduling policies using the framework of mean-field game theory.

The above discussion leads one to the conclusion that designing joint equilibrium control-scheduler policy pairs is still a largely open problem, from both analytical and computational points of view. 
Motivated by these challenges, this paper explores (Nash) equilibrium computation in a two-player nonzero-sum game, where each player represents a scheduler-controller pair. In this framework, we allow for asynchronous scheduling for both players with no information leakage when scheduling decisions are made. Within this general setup, our key contributions are as follows:
\begin{enumerate}
    \item \textbf{Existence of Equilibrium Controller Policies:} We derive sufficient conditions (in Theorem \ref{thm:Nash_control}) that guarantee the existence of equilibrium controller policies for both players, where each policy is characterized as a function of the conditional estimate of the game state. The major challenge here lies in rigorously establishing the  structure of the control policy and of the state estimate.

    \item  \textbf{Scheduling Policy Reformulation:} To tackle the complexity of scheduling, we introduce a class of randomized open-loop scheduling policies, reformulate the policy design problem using a generalized Sylvester equation (in Proposition \ref{Prop:cov_dynamics}), and subsequently discuss solvability conditions (in Theorem \ref{thm:approximation}) to characterize the steady-state behavior of scheduling policies (which turn out to be an infinite degree polynomial function of the decision variables). Thus, to ensure tractability in absence of a closed-form policy expression, we propose a gradient-descent-based algorithm to determine scheduling instances by applying an approximate polynomial truncation of the aforementioned steady-state solution.
\end{enumerate}




\textbf{Organization: } The rest of the paper is organized as follows. We formulate the two-player differential game problem in Section~\ref{sec:Problem_form} and provide some preliminary results in Section~\ref{sec:prelim}. The equilibrium estimator-controller policies for both players are obtained in Section~\ref{sec:Controller}. The scheduler optimization problem and the corresponding equilibrium policy are detailed in Section~\ref{sec:Sensor}.
We provide supporting numerical simulations in Section \ref{sec:numSims} and conclude the paper with its major highlights in Section \ref{sec:conc_disc}, followed by extended proofs in supporting Appendices.

\textbf{Notations:} For a given time $t$, $t^-$ denotes the time right before $t$. For symmetric matrices $X$ and $Y$, the notation $X \succeq Y$ (resp. $X \succ Y$) says that $X-Y$ is positive semi-definite (resp. positive definite).  
$\N(\mu, \Sigma)$ denotes Gaussian distribution with mean $\mu$ and covariance $\Sigma$.
The notation ${\tt card}(S)$ denotes the cardinality of a countable set $S$. $\mathbb N:= \{1, 2, \cdots \}$ denotes the set of natural numbers. We denote a block diagonal (square) matrix $X \in \R^{(n_1+n_2)\times (n_1 + n_2)}$ with diagonal entries $X_1 \in \R^{n_1 \times n_1}$ and $X_2 \in \R^{n_2 \times n_2}$ as $X = {\tt Diag}[X_1, X_2]$.

\section{Problem Formulation}\label{sec:Problem_form}
We consider a two-player stochastic linear-quadratic (LQ) differential game where the players (referred to as $\Pa$ and $\Pb$) can exert control (denoted $u_1$ and $u_2$) over the state of the game $x$, which evolves according to the following linear stochastic differential equation:
\begin{align} \label{eq:dyn}
\begin{split}
    \dx(t) & = [Ax(t) + B_1u_1(t) + B_2u_2(t)] \dt + G \dW(t), \\
    x(0) & \sim \N(0,\Sigma_0)
\end{split}
\end{align}
with $x \in \Re^{n_x}$, $u_i \in \Re^{n_{u_i}}$, and $\{W(t) \in \Re^{n_w}\}_{t \geq 0}$ is an $n_w$--dimensional standard Brownian motion which is independent of the  initial state $x(0)$. 
The control objective over an infinite horizon is given by:
\begin{align} \label{eq:pre_control_cost}
    J = \limsup_{T \rightarrow \infty} \frac{1}{T}\E \Big[ \int_0^Tc(x(t),u_1(t),u_2(t)) \dt\Big],
\end{align}
where the instantaneous objective function is a quadratic one:
\begin{align} \label{eq:control_cost}
    c(x,u_1,u_2) := \|x\|_Q^2 + \|u_1\|^2_{R_1} - \|u_2\|^2_{R_2},
\end{align}
where $Q \succeq 0, R_i \succ 0$, for $i=1,2$. Player $\Pa$ aims to minimize the objective function through its control policy $u_1 (\cdot)$, whereas $\Pb$ aims to maximize the objective function by selecting $u_2(\cdot)$. 

\subsection{Communication-constrained LQ Games}
In contrast to standard perfect-state LQ games, where the players have continuous access to the states $x(t)$ \cite{bacsar1977informationally}, or some partial measurements $y(t)$ \cite{bacsar2014stochastic}, our formulation here focuses on the scenario where each player actively schedules communication over the wireless medium to its remote controller counterpart.
Each such scheduling is costly, and therefore, must be performed judiciously. 
Classic examples of such a setup include pursuit-evasion games \cite{bacsar1998dynamic} and perimeter monitoring games \cite{shishika2020review}, among others.

In this continuous time framework, the players are allowed to schedule the communication in a discrete time fashion. 
That is, a player can have only access to $\{x(t_k)\}_{k \in \mathbb{N}}$, where the discrete scheduling instances $t_k$'s are decided by the player. 
Let $\lambda_{ii} > 0$ denote the scheduling cost per time instant for each Player $i$. 
In addition to that, let $\lambda_{ij} \in \R$ be the cost/payoff for Player $i$ every time Player $j$ communicates the measured state to its controller.
We assume that the scheduler communicates the measured state instantly (i.e., without delay) over a reliable communication channel (i.e., no packet dropout). 
Let us define the set of (possibly random) scheduling instants up to time $t$ for player $i$ as
\begin{align}
    {\tt T}_i(t) := \{t_{\ell,i} \mid \ell = 1, \ldots, n_i(t)\}
\end{align}
where $n_i(t) \in \mathbb{N}$ is the total number of communications up to time $t$. 
Furthermore, we have that $t_{\ell,i}  < t_{\ell',i}$ for $\ell < \ell'$, almost surely. 
Henceforth, we will let ${\tt T}_i := \lim_{T\to \infty} {\tt T}_i(T)$ for $i=1,2,$ and $\tt T$ to denote ${\tt T}_1 \cup {\tt T}_2$.

The incurred communication cost of each player up to time $t$ is given as:
\begin{align}\label{eq:sensing_cost}
    m_i(t) := \lambda_{ii} n_i(t) + \lambda_{ij} n_{j}(t).
\end{align}
As mentioned earlier, the scheduling instances $t_{\ell,i}$ are optimization variables for Player $i$. 
Let $\delta_i(t) \in \{0,1\}$ denote the scheduling decision variable for each player $i$ at time $t$.
That is, $t_{\ell, i} \in {\tt T_i}$ if, and only if, $\delta(t_{\ell, i}) = 1$ for all $\ell$ and player $i$. 

\begin{assumption}
    At every time $t$, each player knows ${\tt T}(t) = {\tt T}_1(t) \cup {\tt T}_2(t)$.
\end{assumption}
This assumption states that each player also knows the communication instances of their opponents. 
However, they do not know the communicated measurement value that the opponent's controller had received.


\subsection{Joint Control-Communication Decision-making}
Combining both the communication scheduling and control objectives, the objective functions of the players become as follows, where $\Pa$ minimizes $J_1$ and $\Pb$ maximizes $J_2$:
\begin{align}
    & J_1  = \limsup_{T \rightarrow \infty} \frac{1}{T}\E \Big[\! \int_0^T \!\!\!\! c(x(t),u_1(t),u_2(t)) \dt + \alpha_1 m_1(T) \Big], \label{eq:cost_P1} \\
    & J_2  = \limsup_{T \rightarrow \infty} \frac{1}{T}\E \Big[ \int_0^T \!\!\!\! c(x(t),u_1(t),u_2(t)) \dt - \alpha_2 m_2(T) \Big], \label{eq:cost_P2}
\end{align}
where $\alpha_1, \alpha_2 > 0$ are tradeoff parameters for balancing the control and scheduling costs. 
We may absorb the $\alpha_i$'s within $m_i$'s by defining new weights $\bar\lambda_{ii} = \alpha_i \lambda_{ii}$ and $\bar\lambda_{ij} = \alpha_i \lambda_{ij}$. 
Consequently, without loss of any generality, we assume $\alpha_1 = \alpha_2 = 1$ for the remainder of the article. 

Finally, the expectations in \eqref{eq:cost_P1} and \eqref{eq:cost_P2} are taken with respect to the stochasticity induced by the Brownian motion in the system dynamics, the initial state distribution, and the possible randomization in the control and scheduling policies. 

\begin{remark}\label{remark:ZSG}
    Although the game without the scheduling objective (i.e., players' objective function being \eqref{eq:pre_control_cost}) is zero-sum, the introduction of the sensing costs makes the new game (i.e., objectives \eqref{eq:cost_P1} and \eqref{eq:cost_P2}) a nonzero-sum (NZS) one.
    The new game with objectives \eqref{eq:cost_P1} and \eqref{eq:cost_P2} becomes zero-sum again under the special case $\lambda_{11} = -\lambda_{21}$, $\lambda_{22} = - \lambda_{12}$.
\end{remark}


\subsection{Information Structures and Nash Equilibrium}
Let us define the information available to the controller and the scheduler of player $i$ at time $t$ as
\begin{align*}
    {\tt I_{C_i}}(t) & := \{x(s), u_i(r), {\tt T}(t) \mid s \in {\tt T}_i(t), r \in [0,t]\} \\
    {\tt I_{C_i}}(0) & := {\tt T}(0)  \\
    {\tt I_{S_i}}(t) & := \{x(r),u_i(r),{\tt T}(t^-) \mid r \in [0,t)\} \\
    {\tt I_{S_i}}(0) & := \emptyset,
\end{align*}
where ${\tt T}(\cdot) = {\tt T}_1(\cdot) \cup {\tt T}_2(\cdot)$ is the combined scheduling instances of both players, and $t^- = \sup \{s\mid s < t\}$ denotes the time `right before' time $t$.

Let $\mu_{Ci}:{\tt I_{C_i}}\! \to \!\R^{n_{u_i}}$ denote a control policy for player $i$.
Here, $\mu_{Ci}$ lies in the space of admissible control policies defined as $\calM_{Ci}: = \{\mu_{Ci} \mid \mu_{Ci} $ is adapted to the sigma field generated by ${\tt I_{C_i}}\}$. Furthermore, let $\mu_{Si}$ denote a scheduler policy for player $i$ which lies in the space $\calM_{Si}$ of admissible scheduling policies which generate a scheduling action $\gamma_i$ according to a Bernoulli distributed random variable (the details of which are presented later in Section \ref{sec:Sensor}).
Henceforth, we will refer to the pair $\mu_i: = (\mu_{Ci}, \mu_{Si})$ as the policy of player $i$.
With the introduction of the above policy, we note that upon performing simple manipulations (subtracting $\alpha_1 \lambda_{12} n_2(t) + \alpha_2 \lambda_{22} n_2(t)$ from \eqref{eq:cost_P1} and adding $\alpha_2 \lambda_{21} n_1(t) + \alpha_1 \lambda_{11} n_1(t)$ to \eqref{eq:cost_P2}), one can arrive at a zero-sum equivalent of the NZS game. This equivalence follows because the transformation on $J_i$ is independent of self policy of $\mathrm{P}_i$, $i=1,2$.

We are now interested in deriving the Nash equilibrium policies $(\mu_1^*, \mu_2^*)$ for the players 
which satisfy

\begin{align}\label{eq:Nash_eqb}
\begin{split}
    J_1^* &:= J_1(\mu_1^*, \mu_2^*) = \min_{\mu_1 \in \calM_{C1} \times \calM_{S1}} J_1(\mu_1, \mu_2^*), \\
    J_2^* &:= J_2(\mu_1^*, \mu_2^*) = \max_{\mu_2\in \calM_{C2} \times \calM_{S2}} J_2(\mu_1^*, \mu_2).
    \end{split}
\end{align}

\section{Background and Preliminaries}\label{sec:prelim}

Before starting our main analysis, we state the solution for the unrestricted scheduling game where both players' schedulers can communicate the state $x(t)$ to their respective controllers \textit{continuously} at all times. In this case we have the zero-sum differential game with objective function \eqref{eq:pre_control_cost}-\eqref{eq:control_cost}. We assume the system model satisfies the following Assumption.
\begin{assumption}\label{Assump_1}
    The pair $(A,B_1)$ is stabilizable and the pair $(A,Q^{1/2})$ is observable.
\end{assumption}
Then, we have the following result on the unrestricted scheduling game.
\begin{theorem}\cite[Section 4.4]{bacsar2008h} \label{thm:classical}
   Suppose Assumption \ref{Assump_1} holds. Further, let $P \succeq 0$ denote the minimal positive-semidefinite solution to the generalized algebraic Riccati equation (GARE) 
\begin{align} \label{eq:ARE}
    A^{\!\top}\! P + PA + Q + P(B_2R_2^{-1}B_2^\top - B_1 R_1^{-1}B_1^\top) P = 0.
\end{align}
Then, the saddle-point control policies are unique and are given by
\begin{align} \label{eq:Nash_u}
    u_i^*(t)  = (-1)^i R_i^{-1}B_i^\top P x(t). 
\end{align}
The expected objective value of the game under this saddle-point solution is 
\begin{align}\label{eq:sec_level}
    J^* = \tr(PGG^\top).
\end{align}
\end{theorem}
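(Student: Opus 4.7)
The plan is to follow the standard completion-of-squares/verification argument that underlies the classical result for ergodic LQ zero-sum differential games. Take the candidate value function $V(x) = x^\top P x$ where $P \succeq 0$ is the minimal solution to the GARE in \eqref{eq:ARE} (whose existence under Assumption \ref{Assump_1} is itself a classical fact that I would invoke from the cited reference), and apply Itô's formula along trajectories of \eqref{eq:dyn} to obtain
\begin{align*}
    dV(x(t)) = \bigl[2 x^\top P(Ax + B_1 u_1 + B_2 u_2) + \tr(P G G^\top)\bigr] dt + 2 x^\top P G\, dW.
\end{align*}

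Next I would combine this with the integrand of the cost \eqref{eq:pre_control_cost}--\eqref{eq:control_cost} and complete the squares in $u_1$ and $u_2$ separately. The cross terms in $u_1$ with coefficient $2 x^\top P B_1$ combine with $\|u_1\|_{R_1}^2$ to give $\|u_1 + R_1^{-1} B_1^\top P x\|_{R_1}^2 - x^\top P B_1 R_1^{-1} B_1^\top P x$, and symmetrically the terms in $u_2$ combine into $-\|u_2 - R_2^{-1} B_2^\top P x\|_{R_2}^2 + x^\top P B_2 R_2^{-1} B_2^\top P x$. Grouping the purely quadratic-in-$x$ terms reproduces exactly the left-hand side of \eqref{eq:ARE}, which vanishes. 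Thus
\begin{align*}
    c(x,u_1,u_2)\,dt + dV = \bigl[\|u_1 + R_1^{-1} B_1^\top P x\|_{R_1}^2 - \|u_2 - R_2^{-1} B_2^\top P x\|_{R_2}^2 + \tr(PGG^\top)\bigr] dt + 2 x^\top P G\, dW.
\end{align*}

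Integrating from $0$ to $T$, taking expectations (the stochastic integral has zero mean under mild integrability that I would check using the closed-loop stability below), dividing by $T$, and letting $T \to \infty$ yields
\begin{align*}
    J(\mu_1,\mu_2) = \tr(PGG^\top) + \limsup_{T \to \infty} \tfrac{1}{T} \E\!\int_0^T\!\! \bigl[\|u_1 + R_1^{-1} B_1^\top P x\|_{R_1}^2 - \|u_2 - R_2^{-1} B_2^\top P x\|_{R_2}^2\bigr] dt,
\end{align*}
provided that $\tfrac{1}{T}\E[V(x(T))] \to 0$ along the admissible strategies considered. Substituting the candidate equilibrium pair $u_i^*(t) = (-1)^i R_i^{-1} B_i^\top P x(t)$ kills both square terms and immediately produces \eqref{eq:sec_level}. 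The saddle-point inequality $J(\mu_1^*, \mu_2) \le J(\mu_1^*,\mu_2^*) \le J(\mu_1, \mu_2^*)$ then follows because, holding one player at its equilibrium policy, the remaining square term in the decomposition is signed in the favorable direction for the other player.

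The main obstacle, and the step I would spend the most care on, is justifying the ergodic limit: one has to verify that under the saddle-point feedback the closed-loop matrix $A - B_1 R_1^{-1} B_1^\top P + B_2 R_2^{-1} B_2^\top P$ is Hurwitz, so that $x(t)$ admits a stationary distribution and $\tfrac{1}{T}\E[V(x(T))] \to 0$. This is precisely where Assumption \ref{Assump_1} enters: stabilizability of $(A,B_1)$ and observability of $(A, Q^{1/2})$ guarantee that the \emph{minimal} PSD solution to \eqref{eq:ARE} produces a stabilizing closed loop (again a classical fact). One must also restrict the minimization and maximization in \eqref{eq:Nash_eqb} to policies that keep $\tfrac{1}{T}\E[V(x(T))]$ bounded, so that unilateral deviations remain admissible; otherwise, a player could drive the state to infinity and the $\limsup$ would be $\pm\infty$, trivially dominated by $J^*$. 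Uniqueness of the equilibrium pair in feedback form follows from the strict convexity/concavity of the Hamiltonian in $u_1$ and $u_2$ respectively and the uniqueness of the minimal stabilizing solution of the GARE.
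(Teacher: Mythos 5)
Your verification argument is correct and is essentially the same route the paper relies on: the paper cites this result from \cite[Section 4.4]{bacsar2008h} without reproving it, but immediately afterwards derives exactly your completion-of-squares identity $J(\mu_1,\mu_2) = J^* + \limsup_{T\to\infty}\frac{1}{T}\E\int_0^T \|u_1 + R_1^{-1}B_1^\top P x\|_{R_1}^2 - \|u_2 - R_2^{-1}B_2^\top P x\|_{R_2}^2\,\dt$, and it likewise invokes the Hurwitz property of the closed-loop matrix under Assumption \ref{Assump_1} via \cite[Theorem 9.7]{bacsar2008h}. The technical caveats you flag (transversality $\tfrac{1}{T}\E[V(x(T))]\to 0$, admissibility under unilateral deviations, and the fact that existence of the minimal PSD solution to \eqref{eq:ARE} is a hypothesis rather than a consequence of Assumption \ref{Assump_1}) are exactly the right ones and are consistent with the paper's remarks on well-posedness.
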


The GARE \eqref{eq:ARE} admits a positive-semidefinite solution only under certain parametric assumptions on the tuple $(A, B_1, B_2, Q, R_1, R_2)$, which are necessary for the well-posedness of the game itself (see \cite[Chapter 6]{bacsar1998dynamic} or \cite[Chapter 4]{bacsar2008h} for details).

Next, we note that the saddle-point control policies \eqref{eq:Nash_u} are feedback in nature and their implementation requires continuous access to the state $x(t)$. 
However, due to the scheduling constraint of our setting, the players can schedule only \textit{intermittently} in order to balance the control performance (i.e., the integral parts in \eqref{eq:cost_P1} and \eqref{eq:cost_P2}) with the scheduling cost. 
Further, in the absence of continuous access to the state values, the optimal control policies may involve estimators (as we will demonstrate in subsequent sections) for estimating the states in-between communication instances. 

One may also notice here that the controller and scheduler information structure for player $i$ are partially nested, i.e., 
\begin{align}\label{eq:partial_nest}
    {\tt I_{C_i}}(t) \subseteq {\tt I_{S_i}}(t), ~~\forall t \in [0,T].
\end{align}
Furthermore, the actions of the controller and the scheduler are executed in a specific sequence, i.e., at any given time $t$, a decision is first made on whether to schedule a communication or not. Depending on the outcome of this decision, the controller may/may not receive new state measurements, which would alter the information available at the controller. This information (including the available history) is subsequently used to compute a control input at the controller. 

\section{Equilibrium Estimator-Controller Policy}\label{sec:Controller}

With the above background, we now proceed toward computing a Nash equilibrium policy for the linear quadratic non-zero sum (LQ-NZS) game with intermittent scheduling as formulated in Section \ref{sec:Problem_form}. In this regard, we first observe that using property \eqref{eq:partial_nest}, one may compactly rewrite \eqref{eq:Nash_eqb} along with \eqref{eq:cost_P1} and \eqref{eq:cost_P2} as:
\begin{subequations}
    \begin{align}
    J_1^* & = \min_{\mu_{S1}}\Big[ \min_{\mu_{C1}}\limsup_{T \rightarrow \infty} \frac{1}{T}\E \Big[ \int_0^T(\|x^{1,*}(t)\|_Q^2 + \|u_1(t)\|^2_{R_1} \nonumber \\
    & \hspace{.5cm} - \|u_2^*(t)\|^2_{R_2}) \dt + \lambda_{11} n_1(T) + \lambda_{12} n_{2}^*(T)\Big]\Big], \label{eq:decoupled_cost_P1}\\
    J_2^* & = \max_{\mu_{S2}}\Big[ \max_{\mu_{C2}}\limsup_{T \rightarrow \infty} \frac{1}{T}\E \Big[ \int_0^T(\|x^{2,*}(t)\|_Q^2 + \|u_1^*(t)\|^2_{R_1} \nonumber \\
    & \hspace{.5cm} - \|u_2(t)\|^2_{R_2}) \dt - \lambda_{22} n_2(T) - \lambda_{21} n_{1}^*(T)\Big]\Big]. \label{eq:decoupled_cost_P2}
\end{align}
\end{subequations}
where $n_i^* = {\tt card}({\tt T}_i(t))$ under the equilibrium scheduling policy of player $i$ and $u_i^*(\cdot)$ is the equilibrium control policy. 
The states $x^{i,*}$ for $i= 1$ and $2$ in \eqref{eq:decoupled_cost_P1} and \eqref{eq:decoupled_cost_P2}, respectively, follow the dynamics \eqref{eq:dyn} under the equilibrium control policy of player $-i$. 
That is,
\begin{align*}
    \dx^{i,*}(t) & = [Ax^{i,*}(t) + B_iu_i(t) + B_{-i}u_{-i}^*(t)] \dt + G \dW(t).
\end{align*}

The above reformulation suggests that we can now solve for the Nash equilibrium policies in two stages by first computing the equilibrium control policies under fixed scheduler policies and consequently computing the equilibrium scheduler policies.

Let us begin by computing the Nash control policies in this section. In this regard,
we first use completion of squares to rewrite \eqref{eq:pre_control_cost} as:
\begin{align}
    J(\mu_1,\mu_2) &= J^*\! + \! \limsup_{T \rightarrow \infty} \frac{1}{T}\E \Big[ \int_0^T \!\!\|u_1(t) + R_1^{-1}B_1^\top P x(t)\|_{R_1}^2 \nonumber \\
    & - \|u_2(t) - R_2^{-1}B_2^\top  P x(t)\|_{R_2}^2 \dt \Big],
\end{align}
where we recall that $J^*$ is defined in \eqref{eq:sec_level}.
This then leads us to the following result on the equilibrium control policies for both players.

\begin{theorem}\label{thm:Nash_control}
Suppose that the hypotheses of Theorem \ref{thm:classical} hold. Further suppose there exists a nonpositive definite solution $\tilde P_{11}$ to the algebraic Riccati equation (ARE)
\begin{align}\label{ARE_P2}
    \tilde P_{11} A + A^\top \tilde P_{11} -Q - \tilde P_{11} B_2 R_2^{-1} B_2^\top \tilde P_{11} = 0.
\end{align}
Then, for any fixed scheduler policies $\mu_{Si}$ for $i=1,2$, the following holds:
    \begin{enumerate}
        \item The equilibrium control policy is unique and is given as:
        \begin{align}\label{eq:eqb_control}
            u_i^*(t) = \mu^*_{Ci}({\tt I_{C_i}}(t)) = (-1)^i R_i^{-1} B_i^\top P\hat{x}_i^*(t),
        \end{align}
        where $P$ is the solution to the GARE in \eqref{eq:ARE} and the  saddle-point equilibrium is attained at $(\hat{x}_1^*(\cdot), \hat{x}_2^*(\cdot))$ where
        \begin{align}
            \hat x_i^*(t) = \bbE[x(t) \mid {\tt I_{C_i}}(t)],~ i = 1,2.
        \end{align}
        \item Define  $\Lambda_i:= PB_i R_i^{-1} B_i^\top P$ for $i=1,2$. Then, the saddle point estimators $(\hat{x}_1^*(\cdot), \hat{x}_2^*(\cdot))$ satisfy the following set of jump differential equations:
        \begin{align}
            &\dot{\hat x}^*_1(t)  \!= \! [A - P^{-1} \Lambda_1 \!+\! P^{-1} \Lambda_2] \hat x_1^*(t), ~ t \in [0,\infty) \!\setminus \! {\tt T}_1 \nonumber \\
            & \hat x_1^*(t_\ell)  = x(t_\ell), \qquad \forall ~~ t_{\ell} \in {\tt T}_1. \label{eq:eqb_xhat1}
        \end{align}
        \begin{align}
            &\dot{\hat x}^*_2(t)  \!=\! [A - P^{-1} \Lambda_1 \!+\! P^{-1} \Lambda_2] \hat x_2^*(t), ~ t \in [0,\infty) \!\setminus \! {\tt T}_2 \nonumber \\
            &\hat x_2^*(t_{\ell'})  = x(t_{\ell'}), \qquad \forall~~ t_{\ell} \in {\tt T}_2. \label{eq:eqb_xhat2}
        \end{align}
    \end{enumerate}
\end{theorem}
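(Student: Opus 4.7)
The plan is to exploit the decoupling in \eqref{eq:decoupled_cost_P1}--\eqref{eq:decoupled_cost_P2} together with the completion-of-squares identity set up just above the theorem statement. For fixed scheduler policies $\mu_{S1}, \mu_{S2}$, the communication penalties are constants in the controller optimization, and the problem reduces, for each player, to a best-response LQ problem under asynchronous intermittent measurements of the state. I would invoke a separation-type argument: compute each player's conditional-mean estimator, solve the control problem with respect to that estimator, and then verify the pair $(\mu_{C1}^*,\mu_{C2}^*)$ is indeed a Nash equilibrium, finally deriving the estimator's jump-ODE representation.

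For the controller step, I would start from the completed-square form of $J$ and condition Player~$1$'s integrand on ${\tt I_{C_1}}(t)$. Using the orthogonality of $x(t)-\hat{x}_1(t)$ to every ${\tt I_{C_1}}(t)$-measurable vector, I would obtain the splitting
\[ \bbE\bigl\|u_1(t)+R_1^{-1}B_1^\top P x(t)\bigr\|_{R_1}^{2} = \bbE\bigl\|u_1(t)+R_1^{-1}B_1^\top P \hat{x}_1(t)\bigr\|_{R_1}^{2} + \mathrm{tr}(\Lambda_1 \Sigma_1(t)), \]
where $\Sigma_1(t):=\bbE[(x-\hat{x}_1)(x-\hat{x}_1)^\top]$ is deterministic by Gaussianity and, under $\mu_{C2}^*$, exogenous to $u_1$ (the dynamics of $x-\hat{x}_1$ carry no direct $u_1$-dependence). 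Introducing the change of variables $w(t):=u_1(t)+R_1^{-1}B_1^\top P \hat{x}_1(t)$ reparameterizes the problem: $\hat{x}_1$ becomes driven linearly by $w$, and the cost collapses to $\bbE\!\int\|w\|_{R_1}^2\,\dt$ plus a $u_1$-independent trace term. The unique minimizer is $w\equiv 0$, yielding $u_1^*(t)=-R_1^{-1}B_1^\top P \hat{x}_1^*(t)$. A symmetric argument on the $-\|u_2-R_2^{-1}B_2^\top P x\|_{R_2}^2$ branch, in which the nonpositive-definite ARE solution $\tilde P_{11}$ certifies global concavity of Player~$2$'s objective over its admissible policy class (so that the stationary point is indeed the global maximizer), yields $u_2^*(t)=R_2^{-1}B_2^\top P \hat{x}_2^*(t)$.

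For the estimator step, I would substitute the equilibrium controls into \eqref{eq:dyn} to obtain the closed-loop SDE
\[ \dx = \bigl[Ax - P^{-1}\Lambda_1 \hat{x}_1^* + P^{-1}\Lambda_2 \hat{x}_2^*\bigr]\dt + G\,\dW, \]
and then take conditional expectation given ${\tt I_{C_i}}(t)$. Since $W$ is independent of ${\tt I_{C_i}}(t)$ and the scheduling is open-loop randomized (so that the \emph{absence} of a sample on $[0,\infty)\setminus{\tt T}_i$ carries no information about $x$), $\hat{x}_i^*$ absorbs no fresh randomness between jumps and thus evolves deterministically. A self-consistency identity $\bbE[\hat{x}_{-i}^*(t)\mid{\tt I_{C_i}}(t)]=\hat{x}_i^*(t)$ then collapses the conditional drift to $[A - P^{-1}\Lambda_1 + P^{-1}\Lambda_2]\hat{x}_i^*$, reproducing \eqref{eq:eqb_xhat1}--\eqref{eq:eqb_xhat2}. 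The jump $\hat{x}_i^*(t_\ell)=x(t_\ell)$ at $t_\ell\in{\tt T}_i$ is immediate from the definition of conditional expectation at a measurement instant.

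The main obstacle is the self-consistency identity $\bbE[\hat{x}_{-i}^*(t)\mid{\tt I_{C_i}}(t)]=\hat{x}_i^*(t)$, since ${\tt I_{C_1}}(t)$ and ${\tt I_{C_2}}(t)$ are not nested: each player observes the opponent's sampling \emph{times} but not the measured \emph{values}. I would establish it by induction on the ordered union of events in ${\tt T}_1\cup{\tt T}_2$: at every common sampling instant both estimators reset to the common value $x(t_\ell)$; on intervals containing no events both obey the same deterministic linear ODE with (by the inductive hypothesis) matching conditional means; iterating the tower property across the resulting partition of $[0,t]$ carries the identity forward in time. A secondary but important technical point is verifying that the candidate pair $(u_1^*,u_2^*)$ attains a \emph{global} saddle of the completed quadratic rather than only a stationary one; this is exactly what the definiteness conditions on $P$ (through Theorem~\ref{thm:classical}) and on $\tilde P_{11}$ together supply, by ensuring that the two completed quadratic branches have the correct (opposite) signs of curvature along any admissible deviation.
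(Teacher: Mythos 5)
Your controller step has a genuine gap. After completion of squares, Player $1$'s best-response objective against $\mu_{C2}^*$ contains \emph{both} branches of the completed square: the term $\bbE\|u_1+R_1^{-1}B_1^\top P x\|^2_{R_1}$, which you treat, \emph{and} the term $-\bbE\|u_2^*-R_2^{-1}B_2^\top P x\|^2_{R_2}=-\bbE\|e_2\|^2_{\Lambda_2}$ with $e_2=x-\hat x_2$, which you silently drop by treating the two branches as if each belonged to only one player's optimization. Since $\Pb$'s implemented estimator is a fixed functional of ${\tt I_{C_2}}$ and ${\tt I_{C_1}}\not\subseteq{\tt I_{C_2}}$, $\Pb$ cannot subtract off $\Pa$'s control from its estimate; a short computation gives $\de_2=[(A-P^{-1}\Lambda_1)e_2+P^{-1}\Lambda_1 e_1+B_1 v]\dt+G\dW$ for a deviation $u_1=-R_1^{-1}B_1^\top P\hat x_1+v$, so $\bbE\|e_2\|^2_{\Lambda_2}$ depends on $v$, and this term enters $J_1$ with a \emph{negative} sign. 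Player $1$ therefore faces a nontrivial trade-off between paying $\bbE\|v\|^2_{R_1}$ and inflating the opponent's error, and "the unique minimizer is $w\equiv 0$" does not follow from orthogonality alone. The same coupling appears symmetrically in $\Pb$'s best response through $+\|e_1\|^2_{\Lambda_1}$.

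This is exactly the difficulty the paper's proof is built to resolve: it fixes $\Pa$'s policy, forms the extended state $z=[x^\top~\hat x_1^\top]^\top$, and solves $\Pb$'s best response as a genuine indefinite-cost LQG problem with weight $\tilde Q={\tt Diag}[-Q,-\Lambda_1]$, whose Riccati solution $\tilde P$ a priori couples $\hat x_2$ and $\hat x_{12}:=\bbE[\hat x_1\mid {\tt I_{C_2}}]$; most of the work goes into showing the resulting law collapses to \eqref{eq:eqb_control} via $\tilde P_{12}=-P-\tilde P_{11}$ and the identity $\hat x_2\equiv\hat x_{12}$. Note also that in your sketch the hypothesis on $\tilde P_{11}$ plays no operative role ("certifies concavity" is asserted, not used), whereas in the paper it is precisely what guarantees existence of the extended Riccati solution $\tilde P$ and hence finiteness and solvability of $\Pb$'s best-response problem. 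Your estimator step and the induction for $\bbE[\hat x_{-i}\mid {\tt I_{C_i}}]=\hat x_i$ are consistent in spirit with what the paper does (it relies on the same identity and leaves its verification as an exercise), but without repairing the best-response argument the proof does not go through.
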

\begin{proof}
    The proof is provided in Appendix I.
\end{proof}

Theorem \ref{thm:Nash_control} provides sufficient conditions for the existence of a \textit{unique} equilibrium pair of control policies for the players.
The existence of solution to the ARE  \eqref{ARE_P2} is to ensure well-posedness of the game by ensuring that player $\Pb$ cannot drive the objective function  \eqref{eq:cost_P2} to infinity. 
It is noteworthy that such well-posedness conditions are necessary for the existence of equilibrium controllers pair in LQ (non) zero-sum games. 
For instance, even in the standard case of continuous communication (i.e., \Cref{thm:classical}) such conditions are imposed by the GARE \eqref{eq:ARE}. 
As we are studying an \textit{intermittent communication} scenario in \eqref{thm:Nash_control}, we need the additional condition \eqref{ARE_P2}. Finally, under Assumption \ref{Assump_1}, the closed-loop matrix $\tilde A:= A - P^{-1} \Lambda_1 \!+\! P^{-1} \Lambda_2$ is Hurwitz (cf. \cite[Theorem 9.7]{bacsar2008h}.

We further note that, LQ games with intermittent communication has been studied \cite{bacsar1991optimum, bacsar1995minimax} under the `sampled-data' framework, where the scheduling instances (i.e., the sets ${\tt T}_i$) are fixed a priori. 
In those cases, similar (but less restrictive) conditions have been discussed for the finite horizon \cite[equation 2.7]{bacsar1991optimum}, \cite[Theorem 3.1]{bacsar1995minimax} and the infinite horizon cases \cite[Theorem  2.2]{bacsar1991optimum}, \cite[equation 3.9]{bacsar1995minimax}, which are necessary for the existence of the equilibrium controller pair. 
In our case, the scheduling instances are design parameters and no upperbound on the inter-sampling instances is imposed (e.g., $t_{\ell+1,i} - t_{\ell,i} < \tau$ for all $l$ and i), which necessitates a stricter condition than those in \cite{bacsar1991optimum, bacsar1995minimax}.\footnote{Our condition \eqref{ARE_P2} may be viewed as a limiting case of \cite[Theorem 2.2]{bacsar1991optimum} where the inter-sampling duration (i.e., $t_{k+1} - t_{k}$) approaches infinity. It can also be viewed as the condition that ensures concavity of the maximization problem when the minimizer uses any open-loop policy \cite{bacsar1998dynamic,bacsar2008h}.}

A more recent work \cite{maity2023efficient} has considered a \textit{finite-horizon} intermittent communication framework with the communication instances being part of the optimization variables.
In that work, a finite-horizon equivalent to the ARE \eqref{ARE_P2} determined the communication instances.\footnote{
The finite-horizon equivalent is the Riccati differential equation:
\begin{align}
   \dot{\tilde{P}}_{11} + \tilde P_{11} A + A^\top \tilde P_{11} -Q - \tilde P_{11} B_2 R_2^{-1} B_2^\top \tilde P_{11} = 0.
\end{align}
} 
More precisely, the \textit{conjugate points} (i.e., finite escape times) of the finite-horizon equivalent of \eqref{ARE_P2} determined the communication instances; see \cite[Theorem~2]{maity2023efficient} for details. 
At this point, it is noteworthy that only $\Pa$ was restricted to intermittent communication in \cite{maity2023efficient} and $\Pb$ had continuous access to the state---a different problem set up than the one considered here.

We further remark that Theorem \ref{thm:Nash_control} is different from the results derived in \cite{bacsar1991optimum,bacsar1995minimax} in two aspects: 

(1) We consider asynchronous scheduling for both players, i.e., the sets of scheduling instants for the players can be different, which is in contrast to the setup of the aforementioned works where synchronous (and pre-defined) scheduling was considered, and 

(2) Our scheduling policy (as we will formalize later) is parametrized within the class of randomized policies thereby leading to the possibility of an unbounded scheduling interval. This thus requires that the solution to the ARE in \eqref{ARE_P2} exists as opposed to the differential Riccati equations proposed in \cite{bacsar1991optimum,bacsar1995minimax} with fixed scheduling instants.


\subsection{Error Dynamics and  Objective Functions}
Let us define the estimation error for player $i$ as $e_i(t) = x(t) - \hat x^*_i(t)$. Then, using the control policy from Theorem \ref{thm:Nash_control}, the error dynamics satisfy the following jump differential equations:
For $t \in [0,\infty) \setminus {\tt T}_1$,
\begin{subequations}
    \begin{align}
    \de_1(t) & = [A + P^{-1} \Lambda_2]e_1(t) \dt - P^{-1} \Lambda_2 e_2(t)\dt + G \dW(t), \nonumber\\
    e_1(t_\ell) & = 0,  \qquad \forall ~~ t_{\ell} \in {\tt T}_1, \label{eq:error1_dyn}
    \end{align}
     and for $t \in [0,\infty) \setminus {\tt T}_2$
    \begin{align}
    \de_2(t) & = [A - P^{-1} \Lambda_1]e_2(t) \dt + P^{-1} \Lambda_1 e_1(t)dt + G \dW(t), \nonumber \\
    e_2(t_{\ell'})&  = 0,  \qquad \forall ~~ t_{\ell'} \in {\tt T}_2. \label{eq:error2_dyn}
\end{align}
\end{subequations}
Let $e(t):= [e_1^\top(t)~ e_2^\top(t)]^\top$. Then, we can compactly write the set of equations \eqref{eq:error1_dyn}-\eqref{eq:error2_dyn} as:
\begin{align}\label{eq:err_combined}
    & \de(t)  = \bar A e(t) \dt + \bar G \dW(t),~\forall ~ t \in [0,\infty) \setminus ({\tt T}_1 \cup {\tt T}_2), \nonumber \\
    & e_1(t_\ell)  = 0   \hspace{2.85 cm} \forall ~ t_{\ell} \in {\tt T}_1, \nonumber \\
    & e_2(t_{\ell'})  = 0 \hspace{2.75 cm} \forall ~ t_{\ell'} \in {\tt T}_2,
\end{align}
where 
\begin{align*}
    \bar A = \begin{bmatrix}
        A + P^{-1} \Lambda_2 & -P^{-1} \Lambda_2 \\
        P^{-1} \Lambda_1 & A - P^{-1} \Lambda_1
    \end{bmatrix} \text{ and }
        \bar G = \begin{bmatrix}
            G \\ G
        \end{bmatrix}.
\end{align*}
Further, the `closed-loop' objective values can be obtained by substituting the saddle-point controller policies in the objective functions \eqref{eq:cost_P1} and \eqref{eq:cost_P2} as:

\begin{align}
J_1(\mu_{S1}, \mu_{S2}) & \!=\!  J^* \!\!+ \limsup_{T \rightarrow \infty} \frac{1}{T}\E \Big[ \int_0^T \!\!\|e_1(t)\|^2_{\Lambda_1} \!-\! \|e_2(t)\|^2_{\Lambda_2} \dt \nonumber \\
& + \lambda_{11} n_1(T) + \lambda_{12} n_2(T)\Big], \label{eq:closed_loop_cost1}\\
J_2(\mu_{S1}, \mu_{S2}) & \!=\!  J^* \!\! + \limsup_{T \rightarrow \infty} \frac{1}{T}\E \Big[ \int_0^T \!\! \|e_1(t)\|^2_{\Lambda_1} \!-\! \|e_2(t)\|^2_{\Lambda_2} \dt \nonumber \\
& - \lambda_{21} n_1(T) - \lambda_{22} n_2(T)\Big], \label{eq:closed_loop_cost2}
\end{align}
where we recall from \eqref{eq:sec_level} that $J^* = \tr(PG G^\top)$ is the game value under the continuous communication case.


\begin{remark}
    Since $\lambda_{ii}>0$, an admissible scheduling policy must satisfy $\limsup_{T \rightarrow \infty} \E[n_i(T)/T] < +\infty$.
    In other words, the scheduling policies will be Zeno-free,\footnote{
    A Zeno behavior is where an uncountable number of switching/trigger-ing/scheduling occurs in a finite interval.
     }---a requirement that is often to be independently verified in intermittent-feedback (e.g., self/event-triggered) controls but is automatically satisfied in this work due to the problem structure.
\end{remark}

The derivation of the (Nash) equilibrium control policies is now complete, which holds for any fixed scheduling policies. 
In the following section, we will focus on constructing the equilibrium scheduling policies for both players. To facilitate this, we will work with a time-discretized version of the continuous system, which allows for a tractable analysis and facilitates the development of efficient algorithms for numerical computations. 
This approach is particularly advantageous compared to the continuous-time domain, where it is often difficult to provide explicit methods for computing the scheduling instants, and where solutions tend to be limited to existential results rather than practical algorithms \cite{baras1989optimal}. 
Additionally, we will adopt a parametrized class of scheduling policies, which simplifies the task of deriving explicit expressions for scheduler policy computation. Although the considered class is open-loop in nature, it is particularly helpful in circumventing the complex behaviour of feedback policies, where computing scheduling policies even within the case of a single-player optimal control problem can be challenging \cite{soleymani2021value}.

\section{Equilibrium Scheduler Policy}\label{sec:Sensor}
\subsection{Discrete-time Formulation}
To entail a tractable analysis of the scheduling policy computation, we begin by discretizing the system dynamics \eqref{eq:error1_dyn}-\eqref{eq:error2_dyn} and the objective functions \eqref{eq:closed_loop_cost1}-\eqref{eq:closed_loop_cost2} using Euler approximation.
%
%
For a fixed discretization interval $h$, let us denote the discrete time index by $k$. 
Further, let us define the set of discrete communication instants by 
\begin{align}\label{eq:disc}
    {\tt S}_i(k) := {\tt T}_i(kh).
\end{align}
We also restrict the possible communication instances (i.e., $t_{\ell, i}$'s) to be multiples of $h$, which results in ${\tt S}_i (k) \subseteq \{0,1,\ldots, k\}$.

Then, using \eqref{eq:err_combined}, one may arrive at the following error dynamics in discrete time as:
\begin{align}\label{eq:err_combined_disc}
    e_{k+1}^- & = e^{\bar A h} e_k + \int_0^h e^{\bar A (h-s)} \bar G \dW(s), \nonumber \\
    e_k &= \begin{cases}
        e_k^- &\forall k \notin {\tt S}_1 \cup {\tt S}_2, \\~\\
         \begin{bmatrix}
        0 & 0 \\
        0 & I
    \end{bmatrix} e_\ell^-, &\forall k \in {\tt S}_1 \cap {\tt S}_2^c, \\~\\
        \begin{bmatrix}
        I & 0 \\
        0 & 0
    \end{bmatrix} e_k^-, &~\forall k \in {\tt S}_2 \cap {\tt S}_1^c,\\~\\
    0, &~\forall k \in {\tt S}_1 \cap {\tt S}_2,
    \end{cases} 
\end{align}
where $e_k = [e_{1,k}^\top~ e_{2,k}^\top]^\top$ and $e_{i,k} = e_i(kh), ~i=1,2$.
Here, $e_k^-$ denotes the estimation error at the controllers before any communication is scheduled. 
After the communication is scheduled at time $k$ (if any), the error is updated and is denoted as $e_k$.

Let us consider the following class of parametrized (stationary) randomized policies as:
\begin{subequations}
    \begin{align}
    \gamma_{1,k} & = \mu_{S1}({\tt I_{S1}}(k)) \sim \text{Bernoulli}(1-p) \\
    \gamma_{2,k} & = \mu_{S2}({\tt I_{S2}}(k)) \sim \text{Bernoulli}(1-q),
\end{align}
\end{subequations}
where $0\leq p,q \leq 1$, and $\gamma_{i,k} \in \{0,1\}$ denotes the scheduling action of player $i$. In this section, we are interested in finding Nash equilibrium scheduling policies for both players within the aforementioned class, i.e., finding the equilibrium pair ($p^*, q^*$). 

    

Now, we will express the objective functions \eqref{eq:closed_loop_cost1}-\eqref{eq:closed_loop_cost2} in discrete-time format as well.
To that end, note that the possible communication instances are $\{kh\}_{k\in \mathbb N}$, and therefore, for any interval $[kh, (k+1)h)$, we have from \eqref{eq:err_combined} that
\begin{align} \label{eq:new_error}
    e(t) =  e^{\bar A (t-kh)} e_k + \int_{kh}^t e^{\bar A (t-s)} \bar G \dW(s),
\end{align}
for all $t \in [kh, (k+1)h)$. 
Consequently, we may further write
\begin{align} \label{eq:e_simplification}
     & \E \left[\int_{kh}^{(k+1)h} \!\!    \|e_1(t)\|^2_{\Lambda_1} \!-\! \|e_2(t)\|^2_{\Lambda_2} \dt \right]  \nonumber \\
     & \quad = \E \left[  \int_{kh}^{(k+1)h} \!\! \|e(t)\|^2_{\Lambda}  \dt \right] 
      \overset{(\dagger)}{=}  \| e_k\|^2_{\bar{\Lambda}(h)} + \varphi(h),
\end{align}
where $\Lambda = {\tt Diag}[\Lambda_1, -\Lambda_2]$ and where $(\dagger)$ follows from substituting \eqref{eq:new_error} for $e(t)$ and using the fact that $\dW(s)$ is independent of $e_k$ for all $s\ge kh$. 
Here,
\begin{subequations}
\begin{align}
   &\bar{\Lambda}(h) = \int_{0}^{h} e^{\bar A^{\!\top} s} \Lambda e^{\bar A s} \ds, \\
   &\varphi(h) =  \int_{t=0}^{h}  \int_{s=0}^{t} \tr\left( e^{\bar{A}^{\! \top} (t-s) } G^\top \Lambda G e^{\bar{A} (t-s) } \right)\ds \dt 
\end{align}    
\end{subequations}
in \eqref{eq:e_simplification}. 
Therefore, we may further write 
\begin{align}
    \limsup_{T \rightarrow \infty} & \frac{1}{T}\E \Big[ \int_0^T \!\!\|e(t)\|^2_{\Lambda} \dt \Big] = \limsup_{m \rightarrow \infty} \frac{1}{mh}\E \Big[ \int_0^{mh} \!\!\|e(t)\|^2_{\Lambda}  \dt \Big] \nonumber \\
    &= \limsup_{m \rightarrow \infty} \frac{1}{mh}\E \Big[ \sum_{k=0}^{m-1} \int_{kh}^{(k+1)h} \!\!\|e(t)\|^2_{\Lambda}  \dt \Big] \nonumber \\
    & \overset{\eqref{eq:e_simplification}}{=}  \limsup_{m \rightarrow \infty} \frac{1}{m}\E \Big[ \sum_{k=0}^{m-1} \| e_k\|^2 _{\frac{\bar{\Lambda}(h)}{h}} \Big] + \frac{\varphi(h)}{h}.
\end{align}
Note that the term $\frac{\varphi(h)}{h}$ is independent of the scheduling policy and is simply a result of rewriting the integration into a summation form. 
It is also noteworthy that $\lim_{h\to 0} \frac{\varphi(h)}{h} = 0$ and $\lim_{h\to 0} \frac{\bar{\Lambda}(h)}{h} = \Lambda$. 
The objective functions \eqref{eq:closed_loop_cost1}-\eqref{eq:closed_loop_cost2} are now rewritten in the discrete-time format as:
\begin{align}
J_1(\mu_{S1}, \mu_{S2}) & =  \tilde{J}^* + \limsup_{m \rightarrow \infty} \frac{1}{m}\E \Big[ \sum_{k=0}^{m-1} \big( \|e_{k}\|^2_{\frac{\bar{\Lambda}(h)}{h}}  \nonumber \\
& \hspace{2.5cm}+ \lambda_{11} \gamma_{1,k} + \lambda_{12} \gamma_{2,k} \big) \Big] \label{eq:closed_loop_cost_disc1} \\
J_2(\mu_{S1}, \mu_{S2}) & =   \tilde{J}^* + \limsup_{m \rightarrow \infty} \frac{1}{m}\E \Big[ \sum_{k=0}^{m-1} \big( \|e_{k}\|^2_{\frac{\bar{\Lambda}(h)}{h}} \nonumber \\
& \hspace{2.5cm} - \lambda_{21} \gamma_{1,k} - \lambda_{22} \gamma_{2,k} \big)\Big],\label{eq:closed_loop_cost_disc2}
\end{align}
where $\tilde{J}^* = J^* + \frac{\varphi(h)}{h} $.

\subsection{Equilibrium Scheduling Policy}
Our aim is to find a Nash scheduling policy pair $(p^*,q^*)$ by optimizing the objective functions \eqref{eq:closed_loop_cost_disc1}-\eqref{eq:closed_loop_cost_disc2} subject to error dynamics \eqref{eq:err_combined_disc}, satisfying the following set of inequalities:
\begin{align*}
    J_1(p^*,q^*) \leq J_1(p,q^*) , &~\forall p \in [0,1] \\
    J_2(p^*,q^*) \geq J_2(p^*,q), &~\forall q \in [0,1].
\end{align*}

In this regard, let us define the error covariance matrix as:
\begin{align}
    \Sigma_k = \mathbb E [e_k e_k^\top] =: \begin{bmatrix}
\Sigma_{11,k} & \Sigma_{12,k} \\
\Sigma_{12,k}^\top & \Sigma_{22,k}
\end{bmatrix},
\end{align}
where the expectation is taken with respect to the randomness induced by the system noise and also the scheduling policy.
Consequently, we state the following proposition, which, as we will show later, will aid in reformulating the equilibrium computation problem as described above into solution of a generalized Sylvester-type equation.

\begin{proposition}\label{Prop:cov_dynamics}
    The evolution dynamics of the error covariance matrix $\Sigma_k$ is given by the following equation:
    \begin{align}\label{eq:error_covariance}
        \Sigma_{k+1} & = A_1(p,q) \Sigma_k A_1(p,q)^\top + A_2(p,q) \Sigma_k A_2(p,q)^\top \nonumber \\
    & ~~~+ A_3(p,q) \Sigma_k A_3(p,q)^\top + G(p,q) \\
    A_1(p,q) & := \begin{bmatrix}
        p \Phi_{11}(h) & p \Phi_{12}(h) \\
        q \Phi_{21}(h) & q \Phi_{22}(h)
    \end{bmatrix} \nonumber \\
    A_2(p,q) & := \sqrt{p(1-p)}\begin{bmatrix}
        \Phi_{11}(h) & \Phi_{12}(h) \\
        0 & 0
    \end{bmatrix} \nonumber \\
    A_3(p,q) & := \sqrt{q(1-q)}\begin{bmatrix}
        0 & 0 \\
        \Phi_{21}(h) & \Phi_{22}(h)
    \end{bmatrix} \nonumber
    \end{align}
    \begin{align*}
    G(p,q) & := \begin{bmatrix}
        p \tilde G_1 & pq \tilde G_2 \\
        pq \tilde G_2^\top & q \tilde G_3
    \end{bmatrix}.
\end{align*}
\end{proposition}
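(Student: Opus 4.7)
The plan is to start from the jump dynamics \eqref{eq:err_combined_disc} rewritten in the one-step form $e_{k+1} = M_{k+1}\bigl(\Phi(h) e_k + W_k\bigr)$, where $\Phi(h) := e^{\bar A h}$ is partitioned into the $2\x 2$ block matrix with entries $\Phi_{ij}(h)$ used in the statement, $W_k := \int_0^h e^{\bar A(h-s)} \bar G \dW(s)$, and the reset matrix $M_{k+1}$ encodes the scheduling outcomes at stage $k+1$. Under the randomized open-loop policies, if we set $\alpha_1 := 1-\gamma_{1,k+1}$ and $\alpha_2 := 1-\gamma_{2,k+1}$, then $\alpha_i \in \{0,1\}$ with $\alpha_1 \sim \text{Bernoulli}(p)$, $\alpha_2 \sim \text{Bernoulli}(q)$ independently, and $M_{k+1} = {\tt Diag}[\alpha_1 I, \alpha_2 I]$. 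Crucially, $M_{k+1}$, $e_k$, and $W_k$ are mutually independent: $M_{k+1}$ by the i.i.d., open-loop nature of the policies, and $W_k$ because the Brownian increment over $[kh,(k+1)h]$ is independent of $\F_{kh}$.

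Next, I would compute $\Sigma_{k+1} = \E[e_{k+1} e_{k+1}^\top]$ by expanding the square. The two cross-terms involving $e_k W_k^\top$ vanish after conditioning on $(M_{k+1}, e_k)$ and using $\E[W_k]=0$. What remains is the propagation term $\E\bigl[M_{k+1} \Phi(h) \Sigma_k \Phi(h)^\top M_{k+1}^\top\bigr]$ plus the noise term $\E\bigl[M_{k+1} Q_W M_{k+1}^\top\bigr]$, where $Q_W := \E[W_k W_k^\top] = \int_0^h e^{\bar A(h-s)} \bar G \bar G^\top e^{\bar A^\top(h-s)} \ds$.

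The main step is to verify that the propagation term is exactly $A_1 \Sigma_k A_1^\top + A_2 \Sigma_k A_2^\top + A_3 \Sigma_k A_3^\top$. Writing $\Phi(h) = [\Phi_1^\top,~ \Phi_2^\top]^\top$ with row-blocks $\Phi_1 = [\Phi_{11}(h),~\Phi_{12}(h)]$ and $\Phi_2 = [\Phi_{21}(h),~\Phi_{22}(h)]$, a direct block computation, using the identity $\alpha_i^2 = \alpha_i$ and the independence of $\alpha_1,\alpha_2$, gives
\begin{align*}
\E\bigl[M_{k+1} \Phi(h) \Sigma_k \Phi(h)^\top M_{k+1}^\top\bigr] = \begin{bmatrix} p\,\Phi_1 \Sigma_k \Phi_1^\top & pq\,\Phi_1 \Sigma_k \Phi_2^\top \\ pq\,\Phi_2 \Sigma_k \Phi_1^\top & q\,\Phi_2 \Sigma_k \Phi_2^\top \end{bmatrix}.
\end{align*}
On the other hand, $A_1 \Sigma_k A_1^\top$ has the same off-diagonal entries but diagonal weights $p^2$ and $q^2$; the deficits $p-p^2 = p(1-p)$ and $q-q^2 = q(1-q)$ are supplied exactly by $A_2 \Sigma_k A_2^\top$ and $A_3 \Sigma_k A_3^\top$, respectively, since $A_2$ has only a nonzero first row-block and $A_3$ only a nonzero second row-block. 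The $\sqrt{\cdot}$ factors in $A_2, A_3$ are there so that each term contributes the correct variance weight after being squared. This bookkeeping is the one delicate point of the proof; the rest is second-moment accounting.

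Finally, the noise term is handled by the same mechanism: partitioning $Q_W$ conformably as $\bigl[\begin{smallmatrix} Q_{W,11} & Q_{W,12} \\ Q_{W,12}^\top & Q_{W,22}\end{smallmatrix}\bigr]$ and applying $\alpha_i^2 = \alpha_i$ once more gives
\begin{align*}
\E\bigl[M_{k+1} Q_W M_{k+1}^\top\bigr] = \begin{bmatrix} p\, Q_{W,11} & pq\, Q_{W,12} \\ pq\, Q_{W,12}^\top & q\, Q_{W,22} \end{bmatrix},
\end{align*}
which matches the claimed $G(p,q)$ upon setting $\tilde G_1 = Q_{W,11}$, $\tilde G_2 = Q_{W,12}$, $\tilde G_3 = Q_{W,22}$. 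Combining the two contributions yields \eqref{eq:error_covariance}, completing the argument. I expect the only nontrivial obstacle to be the clean three-term decomposition of the quadratic form in $\Phi(h)$; the independence arguments and the vanishing of the cross-terms are routine.
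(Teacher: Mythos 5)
Your proof is correct and follows essentially the same route as the paper's: both are direct second-moment computations of $\E[e_{k+1}e_{k+1}^\top]$ from the one-step reset dynamics, using $\E[(1-\gamma_{i,k+1})^2]=\E[1-\gamma_{i,k+1}]$ together with the independence of the two players' Bernoulli draws (and of the fresh Brownian increment) to produce the $p$, $q$, and $pq$ weights on the respective blocks and the noise term $G(p,q)$. The only presentational difference is that you work with the compact form $e_{k+1}=M_{k+1}(\Phi(h)e_k+W_k)$ and explicitly verify the three-term decomposition into $A_1,A_2,A_3$ via the $p-p^2=p(1-p)$ bookkeeping, whereas the paper expands the recursions for $\Sigma_{11,k},\Sigma_{12,k},\Sigma_{22,k}$ separately and leaves that matching step implicit.
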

\begin{proof}
    The proof is presented in Appendix~II.
\end{proof}

Consequently, by defining $\bar p:= 1-p$ and $\bar q:= 1-q$, one may also express the objective functions \eqref{eq:closed_loop_cost_disc1}-\eqref{eq:closed_loop_cost_disc2} in terms of $\Sigma_k$ as:
\begin{subequations}\label{eq:Sigma_cost}
    \begin{align}
    J_1(\mu_{S1}, \mu_{S2}) & =  \tilde J^* \!+\! \limsup_{m \rightarrow \infty} \frac{1}{m}\!\sum_{k=0}^{m-1} \!\big[ \tr(\tilde \Lambda \Sigma_k) \!+\! \lambda_{11} \bar p \!+ \! \lambda_{12} \bar q\big] \label{eq:Sigma_cost1} \\
J_2(\mu_{S1}, \mu_{S2}) & =  \tilde J^* \!+\! \limsup_{m \rightarrow \infty} \frac{1}{m} \sum_{k=0}^{m-1} \! \big[ \tr(\tilde \Lambda \Sigma_k) \! - \! \lambda_{21} \bar p\! - \!\lambda_{22}\bar q\Big],\label{eq:Sigma_cost2}
\end{align}
\end{subequations}
where $\tilde \Lambda = \frac{\Lambda(h)}{h}$.
Thus our reformulated objective is to find a Nash equilibrium $(p^*,q^*)$ by minimizing (resp. maximizing) the objective \eqref{eq:Sigma_cost1} (resp. \eqref{eq:Sigma_cost2}) subject to the equation \eqref{eq:error_covariance}.

Let us define $\Sigma_\infty \triangleq \limsup_{m \rightarrow \infty} \frac{1}{m}\sum_{k=0}^{m-1}\Sigma_k$. Since the limit superior of any sequence is always well-defined, $\Sigma_\infty$ is either bounded or may take unbounded values for at least one element $(\Sigma_\infty)_{ij}$. Thus, the objective functions in \eqref{eq:Sigma_cost} can be written as:
\begin{subequations}\label{eq:cost_sig_inf}
    \begin{align}
    J_1(\mu_{S1}, \mu_{S2}) & = \tilde J^* +  [ \tr(\tilde\Lambda \Sigma_\infty) + \lambda_{11} \bar p + \lambda_{12} \bar q] \label{eq:aa}\\
    J_1(\mu_{S1}, \mu_{S2}) & =  \tilde J^* +  [ \tr(\tilde \Lambda \Sigma_\infty) - \lambda_{21} \bar p - \lambda_{22} \bar q], \label{eq:ab}
\end{align}
\end{subequations}
and $\Sigma_\infty$ satisfies the equation:
\begin{align}\label{eq:ss_cov}
    \Sigma_\infty & = A_1(p,q) \Sigma_\infty A_1(p,q)^\top + A_2(p,q) \Sigma_\infty A_2(p,q)^\top \nonumber \\
    & \hspace{2cm} + A_3(p,q) \Sigma_\infty A_3(p,q)^\top + G(p,q).
\end{align}
The equilibrium scheduling policy design now amounts to finding a steady state solution $\Sigma_\infty$ optimizing the objective functions \eqref{eq:cost_sig_inf} subject to the equation \eqref{eq:ss_cov}.

To that end, we first state the following result which discusses conditions for the existence of a unique bounded solution to \eqref{eq:ss_cov}. Let us define the linear operators $\Pi_1(X):= X$ and $\Pi_2(X) = -\sum_{i=1,2,3} A_i(p,q) X A_i(p,q)^\top$. Further, let $C:= G(p,q)$. Then, the solution to \eqref{eq:ss_cov} can be given in terms of Neumann series \cite{bowers2014introductory} of the above linear operators as presented in the following theorem.

\begin{theorem}[\!\!\cite{jarlebring2018krylov}]\label{thm:approximation}
    Suppose $\rho(\Pi_1^{-1} \Pi_2) <1$, where $\rho(T):=\sup \{|\lambda|:\lambda \in S(T) \}$ as the (operator) spectral radius of $T$ and $S(T)$ denotes the set of eigenvalues of $T$. Then, the following statements hold:
    \begin{enumerate}
        \item The unique bounded solution to \eqref{eq:ss_cov} is obtained as
    \begin{align*}
        \Sigma_\infty^* & = \sum_{j=0}^\infty Y_j, \\
        Y_{j+1} & = -\Pi_1^{-1}(\Pi_2(Y_j)), j \geq 0, ~~ Y_0 = \Pi_1^{-1}(C).
    \end{align*} 
    
    \item Further, if we define $\Sigma_\infty^{(t_p)}:= \sum_{j=0}^{t_p} Y_j$ for a positive truncation parameter $t_p$, then we have that the truncation error $\|\Sigma_\infty^* - \Sigma_\infty^{(t_p)}\|$ can be bounded as
    \begin{align}\label{eq:trunc_Sigma}
        \|\Sigma_\infty^* - \Sigma_\infty^{(t_p)}\| \leq \|\Pi_1^{-1}(C)\| \frac{[\rho(\Pi_1^{-1} \Pi_2)]^{t_p+1}}{1- \rho(\Pi_1^{-1} \Pi_2)}.
    \end{align}
    \end{enumerate}
\end{theorem}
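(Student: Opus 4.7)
The plan is to recast \eqref{eq:ss_cov} as a linear operator equation on the space of symmetric matrices and invert the resulting operator via its Neumann series, which is the standard route taken in \cite{jarlebring2018krylov}. First, I would rewrite \eqref{eq:ss_cov} as $(\Pi_1 + \Pi_2)(\Sigma_\infty) = C$, and then left-compose with $\Pi_1^{-1}$ (which equals the identity here, but the statement is written in this factored form to keep the framework general). Setting $T := -\Pi_1^{-1}\Pi_2$, the equation becomes the fixed-point relation $(I - T)(\Sigma_\infty) = \Pi_1^{-1}(C)$, so solving \eqref{eq:ss_cov} amounts to inverting $I - T$ on the Banach space of symmetric matrices equipped with a submultiplicative norm.

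Second, I would invoke Gelfand's formula together with the hypothesis $\rho(T) = \rho(\Pi_1^{-1}\Pi_2) < 1$: for any $\varepsilon \in (0, 1 - \rho(T))$ there exist $j_0 \in \mathbb{N}$ and $C_\varepsilon > 0$ such that $\|T^j\| \leq C_\varepsilon (\rho(T) + \varepsilon)^j$ for all $j \geq j_0$, equivalently there is an equivalent submultiplicative norm in which $\|T\| \leq \rho(T) + \varepsilon$. This immediately gives absolute convergence of $\sum_{j \geq 0} T^j$ to the bounded inverse of $I - T$, hence uniqueness and existence of a bounded solution, namely $\Sigma_\infty^\star = \sum_{j=0}^\infty T^j \Pi_1^{-1}(C)$. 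A short induction then checks that the recursion $Y_{j+1} = -\Pi_1^{-1}\Pi_2(Y_j) = T(Y_j)$ with $Y_0 = \Pi_1^{-1}(C)$ produces exactly $Y_j = T^j Y_0$, matching the form $\sum_{j=0}^\infty Y_j$ claimed in part (1).

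Third, for the truncation bound in part (2), I would subtract the partial sum from the full sum to get
\begin{align*}
\Sigma_\infty^\star - \Sigma_\infty^{(t_p)} \;=\; \sum_{j=t_p+1}^\infty T^j\, \Pi_1^{-1}(C),
\end{align*}
take norms, apply the triangle inequality, use the geometric bound $\|T^j\| \leq \rho(T)^j$ valid in the norm furnished by Gelfand's theorem, and sum the resulting geometric tail to obtain
\begin{align*}
\|\Sigma_\infty^\star - \Sigma_\infty^{(t_p)}\| \;\leq\; \|\Pi_1^{-1}(C)\| \sum_{j=t_p+1}^\infty \rho(T)^j \;=\; \|\Pi_1^{-1}(C)\| \, \frac{\rho(T)^{t_p+1}}{1-\rho(T)},
\end{align*}
which is \eqref{eq:trunc_Sigma}.

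The main subtlety I expect is not algebraic but \emph{normative}: the clean geometric bound $\|T^j\| \leq \rho(T)^j$ does not hold in every submultiplicative norm, and one must either (i) work in the adapted norm produced by Gelfand's formula, or (ii) accept a multiplicative constant in front of the geometric term. Since the statement is cited from \cite{jarlebring2018krylov}, the cleanest presentation is to check that the operators $\Pi_1$ and $\Pi_2$ defined from the matrices $A_1(p,q), A_2(p,q), A_3(p,q)$ fit the hypotheses of the referenced Neumann-series result and then invoke it directly; the only verification specific to our setting is that these operators indeed act boundedly on the space of symmetric matrices and that $\Pi_1 = I$ is invertible, both of which are immediate.
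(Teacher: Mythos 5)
The paper does not actually prove this statement; it is imported verbatim from the cited reference \cite{jarlebring2018krylov}, so there is no in-paper argument to compare yours against. Your Neumann-series derivation is the standard and correct proof of exactly this kind of result: writing \eqref{eq:ss_cov} as $(I-T)\Sigma_\infty = \Pi_1^{-1}(C)$ with $T=-\Pi_1^{-1}\Pi_2$, invoking $\rho(T)<1$ to invert $I-T$ via $\sum_{j\ge 0}T^j$, identifying $Y_j = T^j Y_0$ by induction, and bounding the tail geometrically. Your flagged subtlety is the one point worth dwelling on: the bound \eqref{eq:trunc_Sigma} as stated uses $\|T^j\|\le \rho(T)^j$, which fails in a generic submultiplicative norm and is only literally valid in a norm adapted to $T$ (or with an extra constant $C_\varepsilon$ and $\rho(T)+\varepsilon$ in place of $\rho(T)$). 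Since $\Pi_1$ is the identity here, $\Pi_1^{-1}\Pi_2$ is a concrete finite-dimensional linear map on symmetric matrices and need not be normal, so the caveat is real; the theorem as transcribed in the paper inherits this imprecision from the way it quotes the reference, and your option (i)/(ii) resolution is the right way to make part (2) rigorous. In short, your proposal is correct and supplies the argument the paper delegates to its citation.
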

Briefly, the above theorem characterizes the steady-state solution to \eqref{eq:ss_cov} in terms of sums of infinite powers of the variables $p$ and $q$. Additionally, \eqref{eq:trunc_Sigma} provides an approximation of the steady state covariance $\Sigma^*_\infty$ using the truncation parameter $t_p$ and finite powers of the polynomial variables $p$ and $q$.

\subsection{Algorithmic Computation of the Equilibrium Policies}
As noted earlier, computing a Nash equilibrium requires optimizing over the set of objective functions \eqref{eq:cost_sig_inf} subject to  \eqref{eq:ss_cov}. 
Furthermore, the steady state solution to \eqref{eq:ss_cov} constitutes an (infinite) series of polynomial terms  in the variables $p$ and $q$, which makes it difficult to carry out an analytical computation of the Nash scheduling policies; thus, the analysis here is rather focused on the development of efficient algorithms.
In the following, we provide two algorithms for computing the Nash policies---one of them involves an exhaustive search and the other one involves an iterative method. 
However, fundamentally, both the algorithms are constructed using the principle of {\tt BestResponse}.
\begin{algorithm}[h]
	\caption{{\tt BestResponse}($i,p_{-i},\eta$) }
 \label{alg:BR}
	\begin{algorithmic}[1]
    \STATE {\textbf{Input: } $i$} \hfill \# Player id: $\Pa,\Pb$ 
    \STATE {\textbf{Input: } $p_{-i}$} \hfill \# Opponent's scheduling probability
    \STATE {\textbf{Input: } $\eta$} \hfill \# Gradient-descent step-size 
    \STATE {\textbf{Parameters:} $\kappa $ \hfill \# stopping tolerance}
    \STATE Initialize $p \in [0,1]$ 
    \IF{$i = \Pa$} \hfill \# computing best response for $\Pa$
    \STATE $ g \gets \tr(\tilde{\Lambda} \frac{\partial \Sigma_\infty}{\partial p} (p, p_{-i})) + \lambda_{11} $  \# gradient of objective \eqref{eq:aa}
    \STATE $p^{\rm new} \gets p - \eta g $ \hfill \# gradient-descent 
    \STATE $p^{\rm new} \gets \max(\min(p^{\rm new},1), 0)$ \hfill \# projection in $[0,1]$
    \IF {$|p - p^{\rm new}| > \kappa$}
    \STATE $p\gets p^{\rm new}$ 
    \STATE go to line 7 \hfill \# continue projected gradient-descent
    \ELSE 
    \STATE go to line 27 \hfill \# stop gradient-descent
    \ENDIF
    \ELSE  \hfill \# computing best response for $\Pb$
    \STATE $ g \gets \tr(\tilde{\Lambda} \frac{\partial \Sigma_\infty}{\partial p} (p_{-i},p)) - \lambda_{22} $ \# gradient of objective \eqref{eq:ab}
    \STATE $p^{\rm new} \gets p + \eta g $ \hfill \# gradient-ascent 
    \STATE $p^{\rm new} \gets \max(\min(p^{\rm new},1), 0)$ \hfill \# projection in $[0,1]$
    \IF {$|p - p^{\rm new}| > \kappa$}
    \STATE $p\gets p^{\rm new}$ 
    \STATE go to line 16 \hfill \# continue projected gradient-descent
    \ELSE 
    \STATE go to line 27 \hfill \# stop gradient-descent
    \ENDIF
    \ENDIF
    \STATE \textbf{return} $p$
    \end{algorithmic}
\end{algorithm}

{\tt BestResponse} is computed in \Cref{alg:BR}, which involves a projected gradient based algorithm.
We note that the {\tt BestResponse}$(\cdot)$ function (\Cref{alg:BR}) requires the computation of the gradients $\frac{\partial \Sigma_\infty}{\partial p} (p, p_{-i})$ (line 7) and $\frac{\partial \Sigma_\infty}{\partial p} (p_{-i},p)$ (line 17), where $\Sigma_\infty(p,q)$ satisfies \eqref{eq:ss_cov}.
In Appendix~III we demonstrate how these derivatives can be computed by solving certain Sylvester-type equations, which can be solved (approximately) using \Cref{thm:approximation}.

\subsubsection{Exhaustive Search Method} In this case, for every value of $q \in [0,1]$ (i.e., $\Pb$'s strategy) we compute the best response for $\Pa$ (i.e.,  {\tt BestResponse}($\Pa,q,\eta_1$)) using \Cref{alg:BR}. 
Let $p^*(q)$ denote the best response value returned by {\tt BestResponse}($\Pa,q,\eta_1$).
Similarly, for every $p\in [0,1]$, let $q^*(p)$ denote the best response value returned by {\tt BestResponse}($\Pb,p,\eta_2$).
Afterwards, we plot $(p, q^*(p))$ for all $p\in [0,1]$ and $(p^*(q), q)$ for all $q \in [0,1]$ and search for the intersection of these two graphs; see Fig.~\ref{Fig:BR_plot} for an illustration. 
The intersection is a Nash equilibrium.
The formal algorithm is presented in \Cref{alg:exhaustive}. 
\begin{algorithm}[h]
	\caption{Exhaustive Search for computing NE}
 \label{alg:exhaustive}
	\begin{algorithmic}[1]
        \STATE {\textbf{Parameter:} $\varepsilon$ \hfill \# coarseness of exhaustive search}
        \STATE $p = 0: \varepsilon: 1$ \hfill \# discretize the values of $p$
        \STATE $q = 0: \varepsilon: 1$ \hfill \# discretize the values of $q$
		\STATE {Initialize: $p^{(0)},q^{(0)} \in [0,1]$}
        \FOR {$p = 0: \varepsilon: 1$}
        \STATE $q^*(p) \gets $ {\tt BestResponse}($\Pb,p,\eta_2$)
        \ENDFOR
        \FOR {$q = 0: \varepsilon: 1$}
        \STATE $p^*(q) \gets $ {\tt BestResponse}($\Pa,q,\eta_1$)
        \ENDFOR
        \STATE NE Pairs $\gets \{(p, q^*(p))\}_{p\in \{0:\varepsilon:1\}} \cap \{ (p^*(q), q)\}_{q\in \{0:\varepsilon:1\}} $ \flushright \# set of all Nash pairs
	\end{algorithmic}
\end{algorithm}

\begin{remark}
    The parameter $\varepsilon$ in \Cref{alg:exhaustive} needs to be chosen sufficiently small to have a non-empty intersection in line~11 (assuming that at least one Nash equilibrium exists). 
    Furthermore, even with arbitrarily small $\varepsilon$ it may still not be possible to find a pair $(p^*, q^*) \in \{(p, q^*(p))\}_{p\in \{0:\varepsilon:1\}} \cap \{ (p^*(q), q)\}_{q\in \{0:\varepsilon:1\}}$ (say, for instance, irrational values of $p^*$ and $q^*$). 
    In that case, one may seek a pair $(p', q')$ such that there exists $(p^1, q^1) \in \{(p, q^*(p))\}_{p\in \{0:\varepsilon:1\}}$ with $|p'- p^1|+|q' - q^1| < \sigma$ and another pair  $(p^2, q^2) \in \{(p^*(q),q)\}_{q\in \{0:\varepsilon:1\}}$ with $|p'- p^2|+|q' - q^2| < \sigma$ for sufficiently small tolerance parameter $\sigma$. 
\end{remark}

\subsubsection{Iterative Search Method}
Next, we provide another algorithm (Algorithm~\ref{alg:Nash}) for computing (an approximate) Nash scheduling policy pair for both players.
The algorithm uses a projected gradient descent/ascent to iteratively compute the best response of each player for a given policy (i.e., value of $p$ or $q$) of the opponent.
If the algorithm converges, then it outputs a Nash pair ($p^*, q^*$).
%
%
\begin{algorithm}[h]
	\caption{Iterative Method for computing NE}
 \label{alg:Nash}
	\begin{algorithmic}[1]
        \STATE {\textbf{Input:} $\varepsilon$ \hfill \# tolerance parameter}
        \STATE {\textbf{Input:} { $\eta_1,\eta_2$} \hfill \# Gradient-descent step sizes}
		\STATE {Initialize: $p^{(0)},q^{(0)} \in [0,1]$}
        \STATE {$p^{(1)} \gets {\tt BestResponse}(\Pa,q^{(0)},\eta_1)$}
        \STATE {$q^{(1)} \gets {\tt BestResponse}(\Pb,p^{(1)},\eta_2)$}
        \STATE $k \gets 1$ \hfill \# iteration counter
        \WHILE{$\max\{|p^{(k)} - p^{(k-1)}|, |q^{(k)} - q^{(k-1)}| \} {>} \varepsilon$} 
        \STATE {$p^{(k+1)} \gets {\tt BestResponse}(\Pa,q^{(k)},\eta_1)$}
        \STATE {$q^{(k+1)} \gets {\tt BestResponse}(\Pb,p^{(k+1)},\eta_2)$}
        \STATE $k \gets k+1$ \hfill \# counter increment
        \ENDWHILE
        \STATE \textbf{Output:} $(p^*, q^*) \gets (p^{(k)}, q^{(k)}).$  \# approximate Nash pair
	\end{algorithmic}
\end{algorithm}
%

Finally, we note that since there may, in general, exist multiple Nash equilibria, one may wish to choose the `best' among those \textit{if} they are comparable. A Nash policy NE1 is better than another Nash policy NE2 if NE1 leads to a strictly better objective value for at least one of the players, and yields no less value for the other player, compared to when NE2 is employed by both players. Of course, two Nash equilibria may not even be comparable. Thus, one may run the above Algorithm for multiple randomly chosen initial conditions to find the best Nash scheduling policies (if possible).

Our discussion on the computation of both the Nash scheduling policy and control policy is now complete, and we proceed to simulate some examples to verify the validity of the theoretical results.
\section{Performance Evaluation}\label{sec:numSims}
In this section, we numerically assess the merits of the proposed Algorithm \ref{alg:Nash} and evaluate its performance on two numerical examples.

\begin{figure}[!b]
\vspace{-4mm}
	\centering
	\includegraphics[width=\columnwidth ]{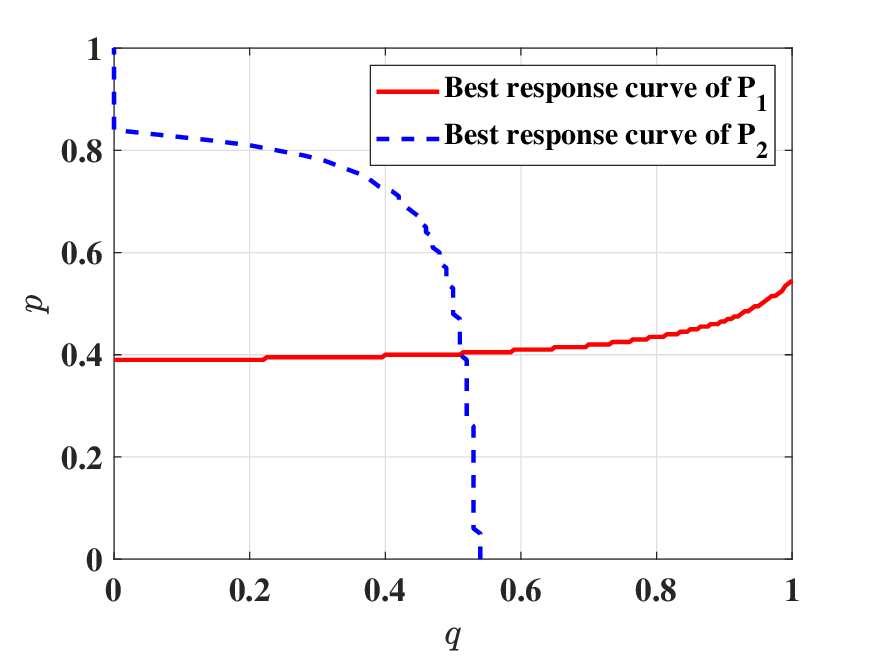}
 \vspace{-0.35cm}
	\caption{\small{Best response plot for each player in \Cref{exp:simpleExample}.
	}}
	\label{Fig:BR_plot}
\end{figure}

\begin{figure}[!b]
	\centering
	\includegraphics[width=0.95\columnwidth ]{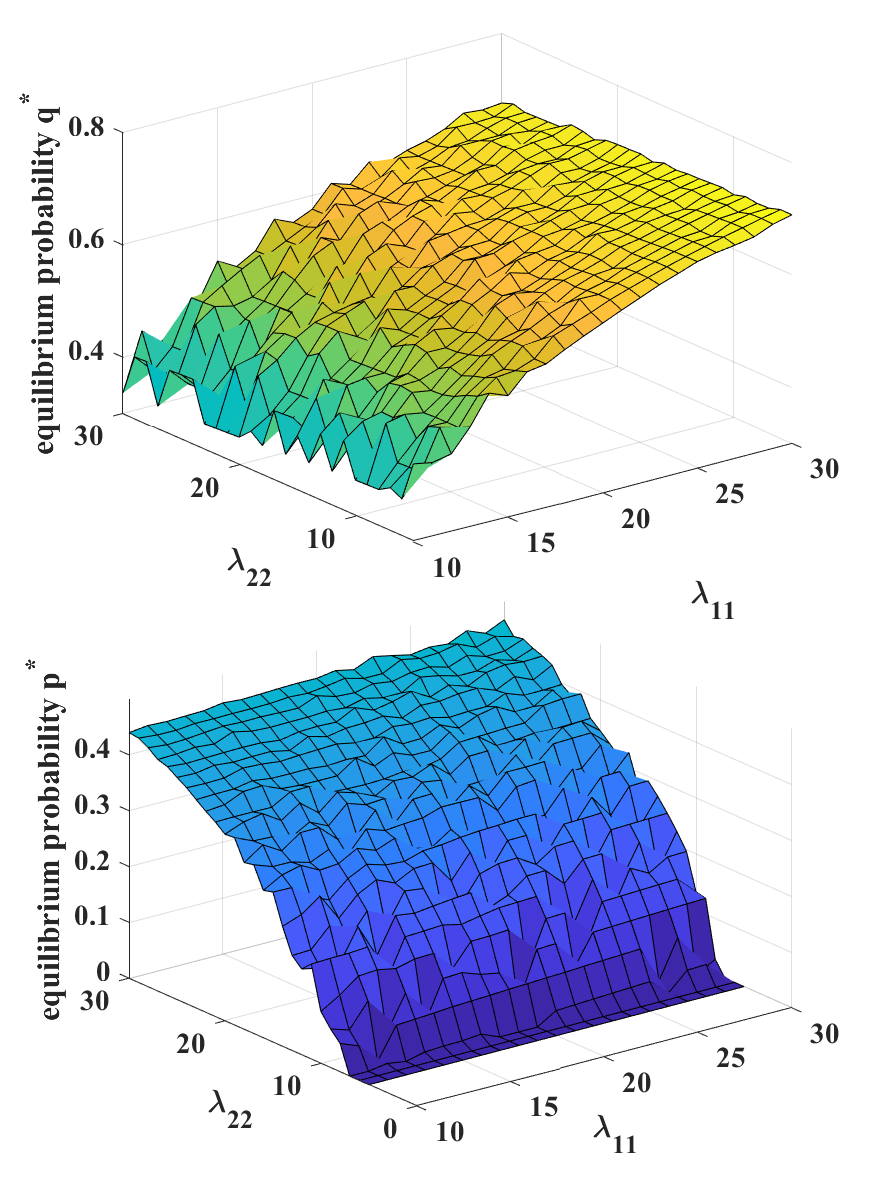}
 \vspace{-0.15cm}
	\caption{\small{Variation of ($p^*,q^*$) versus scheduling costs $\lambda_{11}$ and $\lambda_{22}$ in \Cref{exp:simpleExample}.
	}}
	\label{Fig:eqb_vs_lambda}
 	\vspace{-0.4cm}
\end{figure}

\begin{figure*}
	\centering
	\includegraphics[width=\textwidth ]{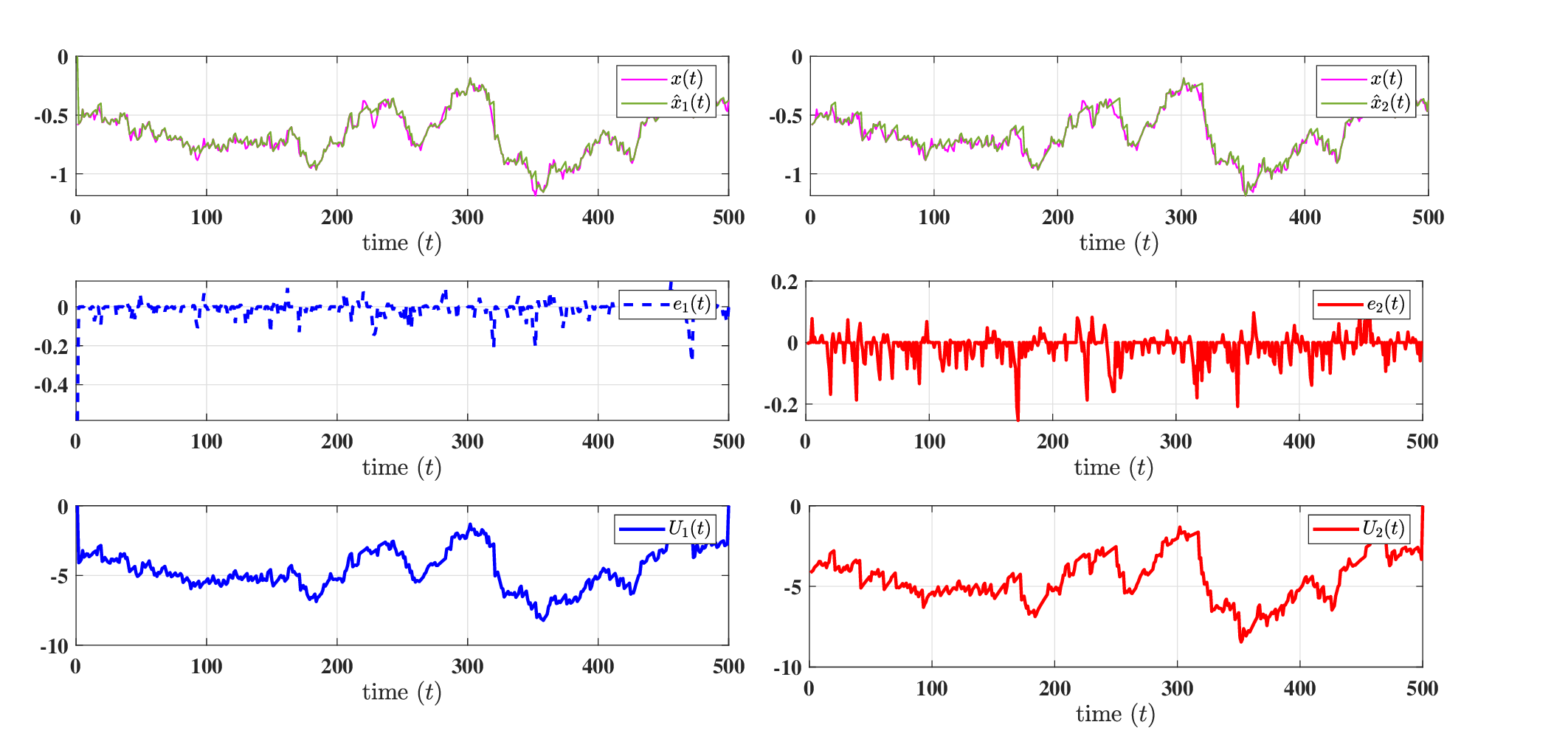}
 \vspace{-0.75cm}
	\caption{\small{Signal trajectories for each player for \Cref{exp:simpleExample}.
	}}
	\label{Fig:real_sim_scalar}
\end{figure*}

\begin{example} \label{exp:simpleExample}
We begin by first considering an unstable scalar system with the parameters: $A=1.5, B_1 = 1, B_2 = 0.5, Q=4, R_1=1, R_2=0.5$ and $G=4$. 
Consequently, we may use \eqref{eq:ARE} to compute the Riccati matrix, which results in $P = 7.1231$. Further, we set $\lambda_{11} = 25, \lambda_{12} = 17, \lambda_{21} = 25$ and $\lambda_{22} = 15$. 
The discretization interval as presented in \eqref{eq:disc} is chosen to be $h=0.01$ and the truncation parameter in Theorem \ref{thm:approximation} is taken to be $t_p=400$. The step sizes and tolerance parameters in Algorithm \ref{alg:Nash} are taken as $\eta = \varepsilon = 0.0001$. 
We then run Algorithm \ref{alg:Nash} for the above system with 10 random initializations of $(p^{(0)},q^{(0)})$. The equilibrium value was obtained to be unique and is given by $p^* = 0.4013$ and $q^* = 0.4934$. 
We also plot the best response curves (i.e., \Cref{alg:exhaustive}) for both players in Fig.~\ref{Fig:BR_plot}, where their intersection point is the Nash equilibrium value. 

For completeness, we also plot the state of the game, their state estimate, error and action trajectories in Fig.~\ref{Fig:real_sim_scalar} for $T = 500$ sec, from which we observe that the state estimates for both player follow the state $x(t)$ quite closely, which implies that the estimation errors for the two players stay bounded for the open-loop unstable system. Finally, in Fig.~\ref{Fig:eqb_vs_lambda}, we also study the variation of the Nash equilibrium pair ($p^*,q^*$) as a function of the scheduling cost coefficients $\lambda_{11}$ and $\lambda_{22}$. From the same, we observe that both $p^*$ and $q^*$ show a general increasing trend (with a slightly irregular peaky surface) with increasing values of $\lambda_{11}$ and $\lambda_{22}$, which means that as the cost to scheduling increases, the players are incentivized to decrease their transmissions, as aligned with intuition.
\end{example}

\begin{example}[Pursuit-Evasion Game] \label{exp:PEG}
    Next, we consider a pursuit-evasion (or an attacker-defender) game, which is a classic example of a two-player game \cite{bacsar1998dynamic,maity2023efficient}.
Let us start by defining the states of the pursuer and the evader as $x_p \in \Re^2$ and $x_e \in \Re^2$, respectively. Consequently, their dynamics are given as:

\begin{align}\label{PE_game}
\begin{split}
    \dx_p(t) &= u_p(t) \dt + G_p \dW(t) \\
    \dx_e(t) &= u_e(t) \dt + G_e \dW(t),
\end{split}
\end{align}
where $u_p(t),u_e(t) \in \Re^2$ denote the control inputs of the pursuer and the evader, respectively. Let us define the state of the game to be $x(t) = x_p(t) - x_e(t)$. The dynamics \eqref{PE_game} then imply that $A = 0_2, B_1 = I_2, B_2 = -I_2$ and $G = G_p-G_e$. Further, we take the objective function of the pursuer to be

\begin{align*}
    J_p & = \limsup_{T \rightarrow \infty} \frac{1}{T}\E \Big[ \int_0^T(\|x(t)\|_{I_2}^2 + \|u_p(t)\|^2_{0.25I_2} \\
    & ~~~~ - \|u_e(t)\|^2_{0.5I_2}) dt + 25 n_p + 17 n_{e}\Big],
\end{align*}

and that of the evader to be
\begin{align*}
    J_e & = \limsup_{T \rightarrow \infty} \frac{1}{T}\E \Big[ \int_0^T(\|x(t)\|_{I_2}^2 + \|u_p(t)\|^2_{0.25I_2} \\
    & ~~~~ - \|u_e(t)\|^2_{0.5I_2}) dt + 25 n_e + 15 n_{p}\Big],
\end{align*}
where $n_p$ and $n_e$ denote the total number of scheduling instants up to time $T$ for the pursuer and the evader, respectively.
One may verify that the above system satisfies \Cref{Assump_1} and subsequently, using \eqref{eq:ARE}, the Riccati matrix is computed to be 
$P = \frac{1}{\sqrt{2}} I_2 
$.
The discretization interval is taken to be $h=0.01$ and the truncation parameter in Theorem \ref{thm:approximation} is taken to be $t_p=400$. The step sizes and tolerance parameters are taken as $\eta=\varepsilon = 0.0001$.
We then ran \Cref{alg:Nash} for the above system with 10 random initializations of $(p^{(0)},q^{(0)})$. The equilibrium value was obtained to be unique for all the different initializations and given by $p^* = 0.8289$ and $q^* = 0.8785$. This implies that the probibilty of scheduling for the players is $1-p^* = 0.1711$ and $1-q^* = 0.1215$. 
Additionally, we also plot the best response curves  (i.e., \Cref{alg:exhaustive}) for both players in Fig.~\ref{Fig:BR_plot_PE}. 
Further, for completeness, the state, estimate, error and action trajectories are plotted for both players in Fig.~\ref{Fig:real_sim_PE} for a horizon of $T=500$ seconds. From the same, we observe that since the scheduling probabilities of both players are quite low, their state estimates do not match very closely with the actual state evolution. The estimation error, however, in both cases is bounded.
\begin{figure}[h]
\vspace{-3mm}
	\centering
    \includegraphics[width=1.07\columnwidth]{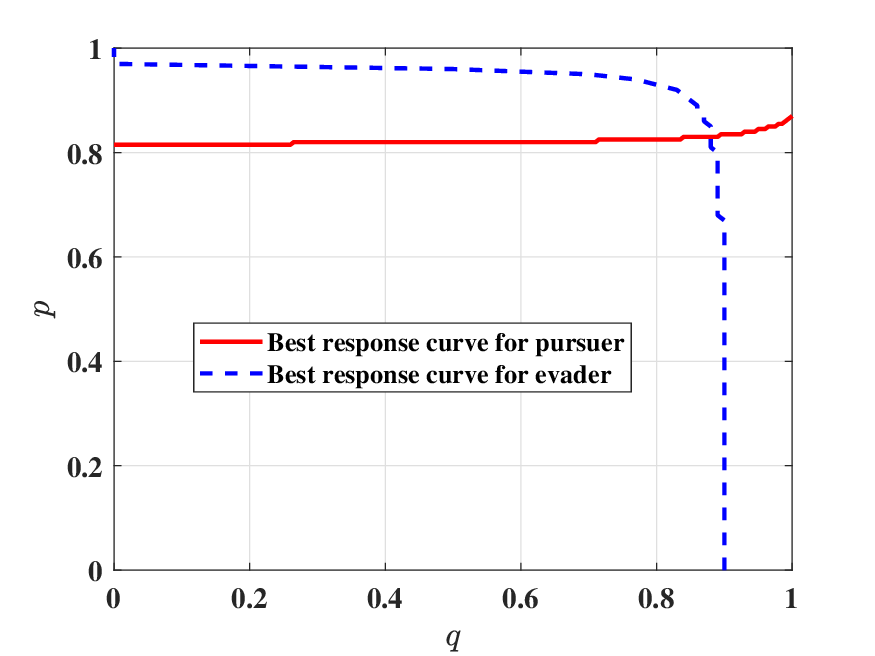}
 \vspace{-0.35cm}
	\caption{\small{Best response plot for each player in \Cref{exp:PEG}.
	}}
	\label{Fig:BR_plot_PE}
 	\vspace{-0.6cm}
\end{figure}
\end{example}

\begin{figure*}[h]
	\centering
	\includegraphics[width=1.07\textwidth ]{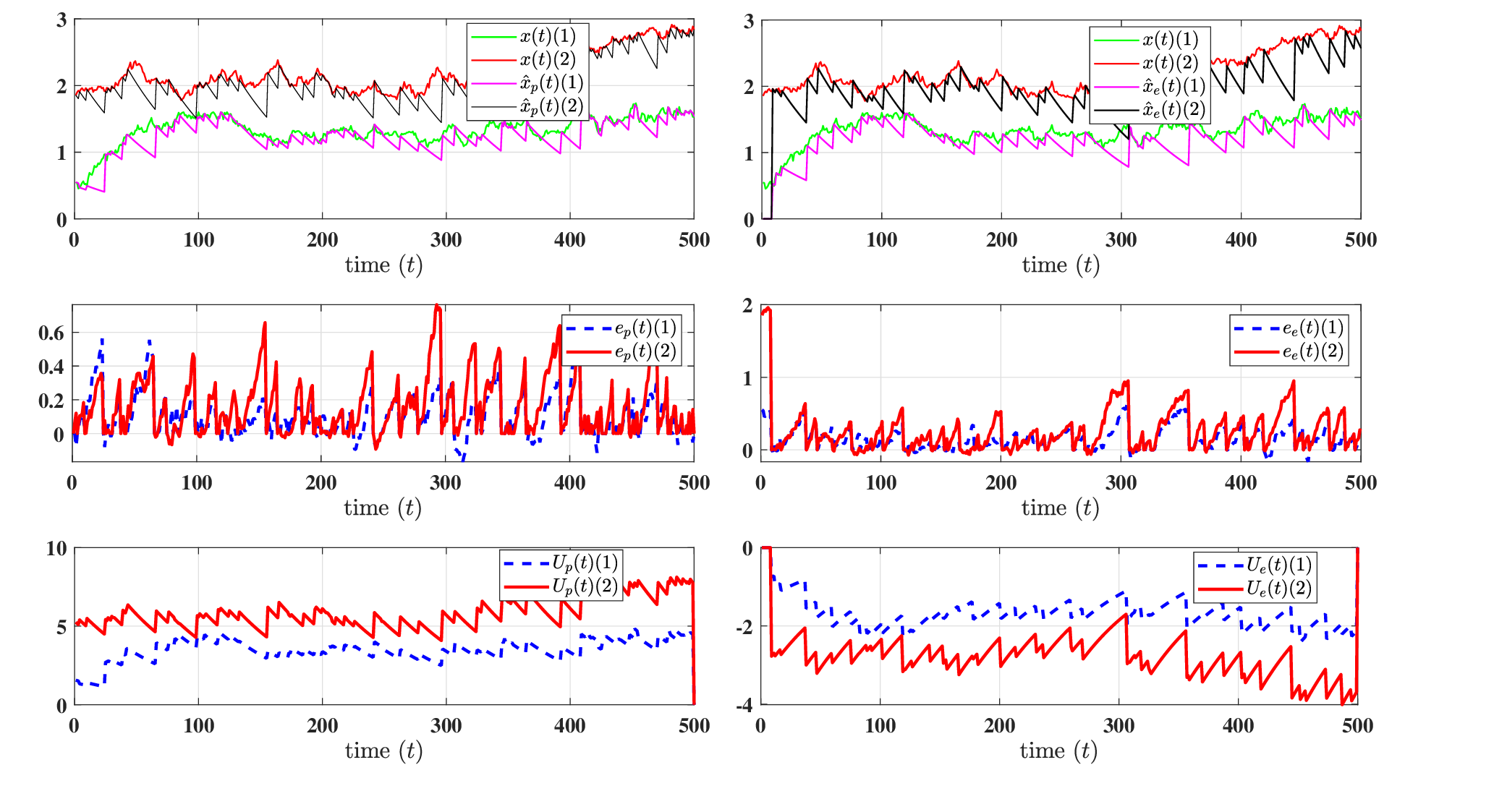}
 \vspace{-0.75cm}
	\caption{\small{Signal trajectories for each player for \Cref{exp:PEG}.
	}}
	\label{Fig:real_sim_PE}
 	\vspace{-0.1cm}
\end{figure*}

\section{Conclusions \& Discussion}\label{sec:conc_disc}
In this article, we have studied the communication-control tradeoff for a class of two-player nonzero-sum differential games. 
Specifically, each player's scheduler acts as an active decision maker to only intermittently communicate the state of the game to its remotely located controller due to an associated cost with each communication. 
Subsequently, each player's controller relies on its causal information to control the state. Within this setup, we have first provided (Nash) equilibrium controller policies for the players based on the conditional estimate of the state. Consequently, we have reformulated the scheduler optimization problems for the two players into an auxiliary problem involving the steady state solution to a generalized Sylvester equation. Since the latter is the sum of infinite powers of a polynomial, we have first provided an exhaustive search algorithm to plot the best response curves for both players using a truncated polynomial. 
Consequently, we have also provided an iterative algorithm to numerically compute a Nash equilibrium policy for the schedulers of the players. Finally, we have provided a rigorous performance evaluation to assess the merits of our approach on two numerical examples. From the results, the pair of Nash equilibrium values for the scheduling policies appears to be unique for multiple different intitializations of the system parameters.

There are several research directions that emanate from our work. Firstly, we note that our information sets of the controllers are oblivious to the information contained within the \textit{no-communication events}.
Specifically, each instant when information is not relayed from scheduler to the controller of a player can additionally inform the conditional estimate of the state at the controller. 
This, however, leads to the estimator becoming nonlinear, and consequent failure of the separation to hold between the design of the controller and the sensor policy for each player. 
A single player optimal control problem taking into account the same has been considered in \cite{soleymani2021value,soleymani2022value}. 
It would be interesting to investigate the case of a two-player Nash game under such a setting.

Another plausible direction worth investigating is to look for equilibrium scheduling policies within a wider class of policies. Here, we used a parametrized class of open-loop policies while it would be interesting to look at a class of closed-loop or feedback scheduling policies.

\bibliography{references,maity}
\bibliographystyle{IEEEtran}

\section*{Appendix I \\ Proof of Theorem \ref{thm:Nash_control}}\label{App:proof1}

To prove that \eqref{eq:eqb_control}-\eqref{eq:eqb_xhat2} provide an equilibrium pair, we show that $\mu^*_{C2}$ is the \textit{best response} to $\mu^*_{C1}$ and vice versa. 
To that end, let us start by fixing $\Pa$'s policy to $\mu^*_{C1}$ and derive $\Pb$'s strategy and show that the corresponding maximizing strategy is indeed $\mu^*_{C2}$.

To start our analysis, let us begin by recalling our definition $\tilde A:= A - P^{-1}\Lambda_1 + P^{-1} \Lambda_2$. Then, we can rewrite $\Pa$'s estimator  \eqref{eq:eqb_xhat1} as
\begin{align*}
    \dot{\hat x}^*_1(t) & = \tilde{A} {\hat x}_1^*(t), ~ t \in [0,\infty) \setminus {\tt T}_1 \\
            \hat x_1^*(t_\ell) & = x(t_\ell), ~~t_\ell \in {\tt T}_1.
\end{align*}

Now, define $z \triangleq [x^\top ~ \hat x_1^\top]^\top$ as the extended state vector. 
After substituting $\Pa$'s controller \eqref{eq:eqb_control} in the dynamics \eqref{eq:dyn}  the dynamics of $z$ follows:
\begin{align}\label{eq:z_dyn}
    \dz \!=\! \Bigg( \underbrace{\begin{bmatrix}
        A & -P^{-1} \Lambda_1 \\
        0 & \tilde A
    \end{bmatrix}}_{=: F} z + \underbrace{\begin{bmatrix}
        B_2 \\
        0
    \end{bmatrix}}_{=: \tilde B_2} u_2\Bigg) \dt \!+\! \underbrace{\begin{bmatrix}
        G \\
        0
    \end{bmatrix}}_{=: \tilde G} \dW(t),
\end{align}
 for all $t \in [0,\infty) \setminus {\tt T}_1$ and $z(t_\ell) = [x(t_\ell)^\top ~ x(t_\ell)^\top]^\top$ for all $t \in {\tt T}_1$. Consequently, by substituting the control policy of $\Pa$ in \eqref{eq:pre_control_cost}, one can also obtain the objective function to be minimized by $\Pb$ as:

 \begin{align}\label{eq:z_obj}
     J = \limsup_{T \rightarrow \infty} \frac{1}{T} \E \Big[ \int_0^T \|z\|^2_{\tilde Q} + \|u_2\|^2_{R_2} \dt \Big]
 \end{align}
where we define $\tilde Q:= {\tt Diag}[-Q, -\Lambda_1]$.
Thus, $\Pb$ must solve the optimal control problem of minimizing  \eqref{eq:z_obj} subject to the dynamics \eqref{eq:z_dyn}. This is a  linear quadratic Gaussian (LQG) optimal control problem with indefinite cost, the solution to which is given as:
\begin{align}\label{eq:control_2}
            u_2^*(t)  = - R_2^{-1} \tilde B_2^\top \tilde P \hat{z}(t),
        \end{align}
where $\hat z(t) = \E[z(t) \mid {\tt I_{C_2}}(t)]$ denotes $\Pb$'s  estimate of the state $z(t)$ and the Riccati matrix $\tilde P$ satisfies the ARE

\begin{align}\label{eq:ARE_Ptilde}
    \tilde P F + F^\top \tilde P + \tilde Q - \tilde P \tilde B_2 R_2^{-1} \tilde B_2^\top \tilde P = 0.
\end{align}

At this point, two items are to be noted: (i) the policy in \eqref{eq:control_2} is not in the form of $\mu^*_{C2}$ in \eqref{eq:eqb_control}, and (ii) the LQG problem in \eqref{eq:z_obj} is non-standard since $\tilde Q$ is negative semidefinite (instead of being \textit{positive} semidefinite), and therefore, the existence of $\tilde{P}$ must be verified. 
The rest of the proof will focus on these two aspects and we start with item (i) in the following.

Note that $\hat z(t) = \E[z(t) \mid {\tt I_{C_2}}(t)]$ follows the dynamics
\begin{align}\label{eq:zhat_dyn}
    \dot{\hat z}(t) = F \hat z(t) + \tilde B_2 u_2(t), ~~ \forall t \in [0,\infty) \setminus {\tt T}_2,
\end{align}
and $\hat z(t_\ell') = [x(t_\ell') ~ \hat x_{12}(t_\ell')]^\top$ for $t_\ell' \in {\tt T}_2$ since $\Pb$ observes $x$ perfectly at the scheduling instances in ${\tt T}_2$. Here, we have defined $\hat x_{12}(t): = \E[\hat x_1(t) \mid {\tt I_{C_2}}(t)]$. 

Next, we will simplify the expression \eqref{eq:control_2}. 
To that end, note that $R_2^{-1} \tilde B_2^\top \tilde P \hat{z}(t) = R_2^{-1} B_2^\top [I~~ 0] \tilde P \hat{z}(t)$.
    Now, let us define
\begin{align} \label{eq:xi}
    [I~~0] \tilde P \hat z(t) = -P \xi(t),
\end{align}
for some signal $\xi(t)$ to be defined soon. 
Upon differentiating both sides of \eqref{eq:xi} yields

\begin{align} \label{eq:xi_dot}
    [I~~0] \tilde P \dot{\hat z}(t) = -P \dot{\xi}(t).
\end{align}
Subsequently, upon substituting the $\hat z$
dynamics from \eqref{eq:zhat_dyn} and the controller \eqref{eq:control_2} in \eqref{eq:xi_dot}, we arrive at

\begin{align*}
    -P\dot{\xi}(t) = [I~0] (\tilde P F - \tilde P \tilde B_2 R_2^{-1} \tilde B_2^\top \tilde P) \hat z(t).
\end{align*}
Using \eqref{eq:ARE_Ptilde} to further simplify the last equation yields

\begin{align}\label{eqn:third_diff_eqn}
    -P \dot{\xi}(t) &= -[I~0]F^\top \tilde P \hat z(t) - [I~0]\tilde Q \hat z(t) \nonumber \\
    & = -[A^\top ~0] \tilde P \hat{z}(t) + Q \hat{x}_2 \nonumber \\
    & = A^\top P \xi(t) + Q \hat{x}_2 \nonumber \\
    & = A^\top P (\xi(t) - \hat x_2(t)) - P \tilde A \hat x_2(t),
\end{align}
where we have used the GARE \eqref{eq:ARE} to substitute $Q$ and we define $\hat x_2(t) = \bbE[x(t) \mid {\tt I_{C_2}}(t)]$.

Note that, with help of \eqref{eq:xi}, we may now rewrite 
\begin{align*}
    u_2^*(t) =  R_2^{-1}B_2^\top P \xi(t)
\end{align*}
At this point, it only remains to show that $\xi(t) = \hat x_2(t)$ for all $t \in [0,\infty)$ to conclude that $\mu^*_{C2}$ in \eqref{eq:eqb_control} is the best response to $\mu^*_{C1}$.
Indeed, if $\xi\equiv \hat{x}_2$, then \eqref{eqn:third_diff_eqn} yields $\dot{\hat{x}}_2(t) = \tilde{A}\hat{x}_2(t)$, which coincides with the dynamics \eqref{eq:eqb_xhat2}.
Now, let us formally show that $\xi\equiv \hat{x}_2$ with $\hat x_2$ following the \textit{jump} differential equation \eqref{eq:eqb_xhat2}. 




To that end, let us start by substituting \eqref{eq:control_2} into \eqref{eq:zhat_dyn}:
\begin{subequations} \label{eqn:two_diff_eq}
\begin{align}
    \dot{\hat x}_2(t) & = A \hat x_2(t) -P^{-1}\Lambda_1 \hat x_{12}(t) + P^{-1} \Lambda_2 \xi(t)  \\
    \dot{\hat x}_{12}(t) &= \tilde A {\hat x}_{12}(t),
\end{align}
\end{subequations}
where recall that $\hat{z}(t) =  [\E[x(t) \mid {\tt I_{C_2}}(t)]^{\!\top} ~  \E[ \hat x_1(t) \mid {\tt I_{C_2}}(t)]^{\!\top}]^\top $ $ = [ \hat{x}_2(t)^\top ~ \hat{x}_{12}(t)^\top ]^\top$.
At the communication instances $t_{\ell} \in {\tt T}_2$, we define 
$\hat{x}_2(t_\ell) = x(t_\ell)$
and $\hat{x}_{12}(t_\ell) = x(t_\ell)$. 
One may then verify that $\hat{x}_2(t) = \hat{x}_{12}(t)$ for all $t$, which we leave as an exercise to the reader.\footnote{A similar condition was also derived in \cite[equation (A.14)]{maity2017linear} that adopted a significantly different approach involving Gateaux differentials. 
Essentially, what $\hat{x}_2(t) = \hat{x}_{12}(t)$ implies is that $\Pb$'s estimate of the state and $\Pb$'s estimate of $\Pa$'s estimate of the state coincide.}
Now, $\hat{x}_2(t) = \hat{x}_{12}(t)$ in conjunction with \eqref{eqn:two_diff_eq} results in $\hat{x}_2(t) = \hat{x}_{12}(t) = \xi(t)$ for all $t$. 


Next, let us show that  $\tilde P$ is well defined. Let us denote
\begin{align*}
    \tilde P = \begin{bmatrix}
        \tilde P_{11} & \tilde P_{12} \\
        \tilde P_{12} & \tilde P_{22}
    \end{bmatrix}.
\end{align*}
Then, multiplying \eqref{eq:ARE_Ptilde} on the left by $[I~0]$ and on the right by $[I~0]^\top$, we obtain the ARE
\begin{align*}
    \tilde P_{11} A + A^\top \tilde P_{11} -Q - \tilde P_{11} B_2 R_2^{-1} B_2^\top \tilde P_{11} = 0.
\end{align*}
Then, using the hypotheses of Theorem \ref{thm:Nash_control}, the solution to the above ARE exists, which also implies that the matrix $A_P:= A - \tilde  B_2 R_2^{-1} B_2^\top \tilde P_{11}$ is Hurwitz. 

Next, we prove that the existence of $\tilde P_{11}$ implies existence of $\tilde P_{12}$. Let us begin by defining 
\begin{align*}
    \tilde P_a := \tilde P_{11} + \tilde P_{12}.
\end{align*}
Then, by multiplying \eqref{eq:ARE_Ptilde} on the left by $[I~0]$ and on the right by $[I~I]^\top$, we arrive at the following equation:
\begin{align*}
     A^\top & \tilde P_a + \tilde P_a A - Q - \tilde P_a B_1R_1^{-1}B_1^\top P + \tilde P_{12} B_2R_2^{-1}B_2^\top P \\
    & \hspace{2.5cm} - \tilde P_{11} B_2R_2^{-1} B_2^\top \tilde P_a = 0.
\end{align*}
Adding this equation to \eqref{eq:ARE} yields
\begin{align*}
    A^\top & (\tilde P_a + P) + (\tilde P_a + P) A - (\tilde P_a + P) B_1R_1^{-1}B_1^\top P \\
    & + (\tilde P_a + P) B_2R_2^{-1}B_2^\top P + \tilde P_{12} B_2R_2^{-1} B_2^\top (\tilde P_a + P) \\
    & - \tilde P_a B_2R_2^{-1} B_2^\top (\tilde P_a + P) = 0,
\end{align*}
which, upon combining terms, can be re-written as
\begin{align*}
    (\tilde P_a + P) \tilde A + A_P^\top  (\tilde P_a + P) = 0. 
\end{align*}
This implies that
\begin{align*}
    e^{\tilde At} (\tilde P_a + P) e^{A_P^\top t} = \text{constant}.
\end{align*}
Since both $A_P$ and $\tilde A$ are Hurwitz, we conclude that $\tilde P_a + P = 0$ or equivalently, $\tilde P_{12} = -P - \tilde P_{11}$. 
This, together with the existence of the matrix $P$ under Assumption \ref{Assump_1}, and the existence of $\tilde P_{11}$  as discussed before, implies the existence of $\tilde P_{12}$. 
Finally, by multiplying \eqref{eq:ARE_Ptilde} by $[0~I]$ on the left and by $[0~I]^\top$ on the right, we obtain the ARE
\begin{align*}
    \tilde P_{22} \tilde A & + \tilde A^\top \tilde P_{22} - \tilde P_{12} P^{-1}\Lambda_1 - \Lambda_1^\top P^{-1} \tilde P_{12} - \Lambda_1 \\
    & - \tilde P_{12} B_2 R_2^{-1} B_2^\top \tilde P_{12} = 0.
\end{align*}
Then, since $\tilde P_{12}$ exists and $\tilde A$ is Hurwitz (from Theorem \eqref{thm:Nash_control}), we conclude that $\tilde P_{22}$ exists. This finally implies that $\tilde P$ exists, and completes our analysis for the policy of player 2.

Let us now turn to showing that the best response policy of $\Pa$ is given by $\mu^*_{C1}$ when $\Pb$ follows the policy $\mu^*_{C2}$. This can be proven by repeating exactly the same steps as above, and hence, we do not detail them here. Rather, we only show that we do not require any additional existence condition for $\Pa$ analogous to the ARE \eqref{eq:ARE_Ptilde} for $\Pb$. Let us consider the following ARE which would appear analogous to \eqref{eq:ARE_Ptilde} when computing the optimal control policy for $\Pa$ under $\mu^*_{C2}$:
\begin{align}\label{eq:ARE_Ptilde_P1}
    \tilde M F + F^\top \tilde M + \tilde{\tilde Q} - \tilde M \tilde B_2 R_2^{-1} \tilde B_2^\top \tilde M = 0,
\end{align}
with $\tilde{\tilde Q} = {\tt Diag}[Q~-\Lambda_2]$ and $\tilde M$ defined as
\begin{align*}
    \tilde M = \begin{bmatrix}
        \tilde M_{11} & \tilde M_{12} \\
        \tilde M_{12} & \tilde M_{22}
    \end{bmatrix}.
\end{align*}
Then, multiplying \eqref{eq:ARE_Ptilde_P1} on the left by $[I~0]$ and on the right by $[I~0]^\top$, one can arrive at the equation:
\begin{align*}
    \tilde M_{11} A + A^\top \tilde M_{11} + Q - \tilde M_{11} B_1R_1^{-1} B_1^\top \tilde M_{11} = 0.
\end{align*}
The existence of $\tilde M_{11}$ then follows under Assumption \ref{Assump_1}. This then also implies that the matrix $A_M = A - B_1R_1^{-1}B_1^\top \tilde M_{11}$ is Hurwitz. 

Next, multiplying \eqref{eq:ARE_Ptilde_P1} on the left by $[I~0]$ and on the right by $[I~I]^\top$, and by letting $X := \tilde M_{11} + \tilde M_{12} -P$, we arrive at the following equation: 
\begin{align*}
    & XA + A^\top X + X B_2R_2^{-1} B_2^\top P - X B_1 R_1^{-1} B_1^\top P \\
    & - \tilde M_{11} B_1 R_1^{-1} B_1^\top X = 0.
\end{align*}
Upon combining terms, the above equation can be further written as:
\begin{align*}
    X \tilde A + A_M X = 0
\end{align*}
which leads to the expression
    $e^{A_M t} X e^{\tilde A t} = \text{constant}$.
Then, since both $\tilde A$ and $A_M$ are Hurwitz matrices, we conclude that $X=0$ or equivalently, $\tilde M_{12} = P - \tilde M_{11}$. This then implies the existence of $\tilde M_{12}$. The existence of $\tilde M_{22}$ can be proven again by multiplying both sides of \eqref{eq:ARE_Ptilde_P1} by $[0~I]$ and its transpose on the left, and right, respectively. This entails the existence of $\tilde M$ and completes the proof of the theorem.

\section*{Appendix II \\ Proof of Proposition \ref{Prop:cov_dynamics}}
Let us begin by defining the state transition matrix of 
\begin{align*}
    \Phi_{\bar A} (t) = e^{\bar A t} =: \begin{bmatrix}
        \Phi_{11}(t) & \Phi_{12}(t) \\
        \Phi_{21}(t) & \Phi_{22}(t)
    \end{bmatrix},
\end{align*}
where recall that (see the unnumbered equation after \eqref{eq:err_combined})
\begin{align*}
    \bar A = \begin{bmatrix}
        A + P^{-1} \Lambda_2 & -P^{-1} \Lambda_2 \\
        P^{-1} \Lambda_1 & A - P^{-1} \Lambda_1
    \end{bmatrix} .
\end{align*}
Then, the estimation error dynamics in \eqref{eq:err_combined_disc} can be re-written as
\begin{align}
    e_{1,k+1} & = (1-\gamma_{1,k+1}) \Big[ \Phi_{11}(h) e_{1,k} + \Phi_{12}(h) e_{2,k} \nonumber \\
    & \qquad + \int_0^h [\Phi_{11}(h-s) + \Phi_{12}(h-s)]G \dW(s)\Big] \label{eq:error_dyn_discretized1} \\
    e_{2,k+1} & = (1-\gamma_{2,k+1}) \Big[ \Phi_{21}(h) e_{1,k} + \Phi_{22}(h) e_{2,k} \nonumber \\
    & \qquad + \int_0^h [\Phi_{21}(h-s) + \Phi_{22}(h-s)]G \dW(s)] \label{eq:error_dyn_discretized2}
\end{align}

Then, using \eqref{eq:error_dyn_discretized1}-\eqref{eq:error_dyn_discretized2}, one can show that $\Sigma_{11,k}$ satisfies the following recursive equation: 

\begin{align*}
    & \Sigma_{11,k+1} = \mathbb E [e_{1,k+1} e_{1,k+1}^\top] \\
    & \!= \! \mathbb E[(1-\gamma_{1,k+1})^2] \Big[\Phi_{11}(h) \Sigma_{11,k} \Phi_{11}^\top(h) \!+\! \Phi_{11}(h) \Sigma_{12,k} \Phi_{12}^\top(h) \\
    & ~~~ + \Phi_{12}(h) \Sigma_{12,k}^\top \Phi_{11}^\top(h) + \Phi_{12} \Sigma_{22,k} \Phi_{12}^\top(h) + \tilde G_1(h)\Big] \\
    & = p \Big[\Phi_{11}(h) \Sigma_{11,k} \Phi_{11}^\top(h) + \Phi_{11}(h) \Sigma_{12,k} \Phi_{12}^\top(h) \\
    & ~~~ + \Phi_{12}(h) \Sigma_{12,k}^\top \Phi_{11}^\top(h) + \Phi_{12} \Sigma_{22,k} \Phi_{12}^\top(h) + \tilde G_1(h)\Big] \\
\end{align*}
where we define $\tilde G_1(h)$ as
\begin{align*}
    \tilde G_1(h) = \int_0^h (\Phi_{11}(h-s) & + \Phi_{12}(h-s)) GG^\top \\
    & \times (\Phi_{11}(h-s) + \Phi_{12}(h-s))^\top \ds.
\end{align*}

Similarly, one can also obtain recursions for $\Sigma_{12,k}$ and $\Sigma_{22,k}$ as follows:

\begin{align*}
    \Sigma_{12,k+1} & = pq \Big[ \Phi_{11}(h) \Sigma_{11,k} \Phi_{21}^\top(h) + \Phi_{11}(h) \Sigma_{12,k} \Phi_{22}^\top(h) \\
    & \hspace{-5mm} + \Phi_{12}(h) \Sigma_{12,k}^\top \Phi_{21}^\top(h) + \Phi_{12}(h) \Sigma_{22,k} \Phi_{22}^\top(h) + \tilde G_2(h) \Big] \\
    \Sigma_{22,k+1} & = q (\Phi_{21} (h)\Sigma_{11,k} \Phi_{21}^\top(h) + \Phi_{21}(h) \Sigma_{12,k} \Phi_{22}^\top(h) \\
    & \hspace{-5mm} + \Phi_{22}(h) \Sigma_{12,k}^\top \Phi_{21}^\top(h) + \Phi_{22}(h) \Sigma_{22,k} \Phi_{22}^\top(h) + \tilde G_3(h)\Big],
\end{align*}
where we define $\tilde G_2(h)$ and $\tilde G_3(h)$ as
\begin{align*}
    \tilde G_2(h) = \int_0^h (\Phi_{11}(h-s) & + \Phi_{12}(h-s)) GG^\top \\
    & \times (\Phi_{21}(h-s) + \Phi_{22}(h-s))^\top \ds \\
    \tilde G_3(h) = \int_0^h (\Phi_{21}(h-s) & + \Phi_{22}(h-s)) GG^\top \\
    & \times (\Phi_{21}(h-s) + \Phi_{22}(h-s))^\top \ds.
\end{align*}
This then completes the proof of the Proposition.

\section*{Appendix III \\ Gradient Computation}
Here, we detail the procedure to compute the derivative of the objective with respect to the decision variable $p$; the one with respect to $q$ can be carried out analogously. Let us begin by differentiating \eqref{eq:ss_cov} on both sides to arrive at the partial differential equation
\begin{align*}
    \frac{\partial \Sigma_\infty}{\partial p} & \!=\! \!\!\!\sum_{i=1,2,3} \!\! \frac{\partial A_i(p,q)}{\partial p} \Sigma_\infty A_i(p,q)^\top \! \!+ \! A_i(p,q) \frac{\partial \Sigma_\infty}{\partial p} A_i(p,q)^\top \nonumber \\
    & \hspace{2cm} + A_i(p,q) \Sigma_\infty \frac{\partial A_i(p,q)^\top}{\partial p}  + \frac{\partial G(p,q)}{\partial p}.\\
\end{align*}
Define $\Sigma_\infty^{(p)}:= \frac{\partial \Sigma_\infty}{\partial p}$ and $\Sigma_\infty^{(q)}:= \frac{\partial \Sigma_\infty}{\partial q}$. Then, we have that
\begin{subequations}\label{eq:grad_comp}
    \begin{align}
    \Sigma_\infty^{(p)} & = \sum_{i=1,2,3} A_i(p,q) \Sigma_\infty^{(p)} A_i(p,q)^\top + \mathcal{K} \\
    \mathcal{K} & := \sum_{i=1,2,3}\frac{\partial A_i(p,q)}{\partial p} \Sigma_\infty A_i(p,q)^\top \nonumber \\
    & \hspace{0.5cm}+ A_i(p,q) \Sigma_\infty \frac{\partial A_i(p,q)^\top}{\partial p}  + \frac{\partial G(p,q)}{\partial p}.
\end{align}
\end{subequations}
Now, if the sufficient condition on $\rho(\cdot)$ in Theorem \ref{thm:approximation} is satisfied, it guarantees the existence of the solution to \eqref{eq:grad_comp}.
Then, an iterative solution  to the Sylvester type equation \eqref{eq:grad_comp} can be obtained using the first result in Theorem \ref{thm:approximation} to compute an approximate value of $\Sigma_\infty'$, which can subsequently be used to compute the gradients as
\begin{align*}
    \frac{\partial J}{\partial p} = \tr\Big( \Lambda \Sigma_\infty^{(p)} \Big) + \lambda_{11} \\
    \frac{\partial J}{\partial q} = \tr\Big( \Lambda \Sigma_\infty^{(q)} \Big) - \lambda_{22}.
\end{align*}
This completes the gradient computation.

\begin{IEEEbiography}[{\includegraphics[width=1in,height=1.25in,clip,keepaspectratio]{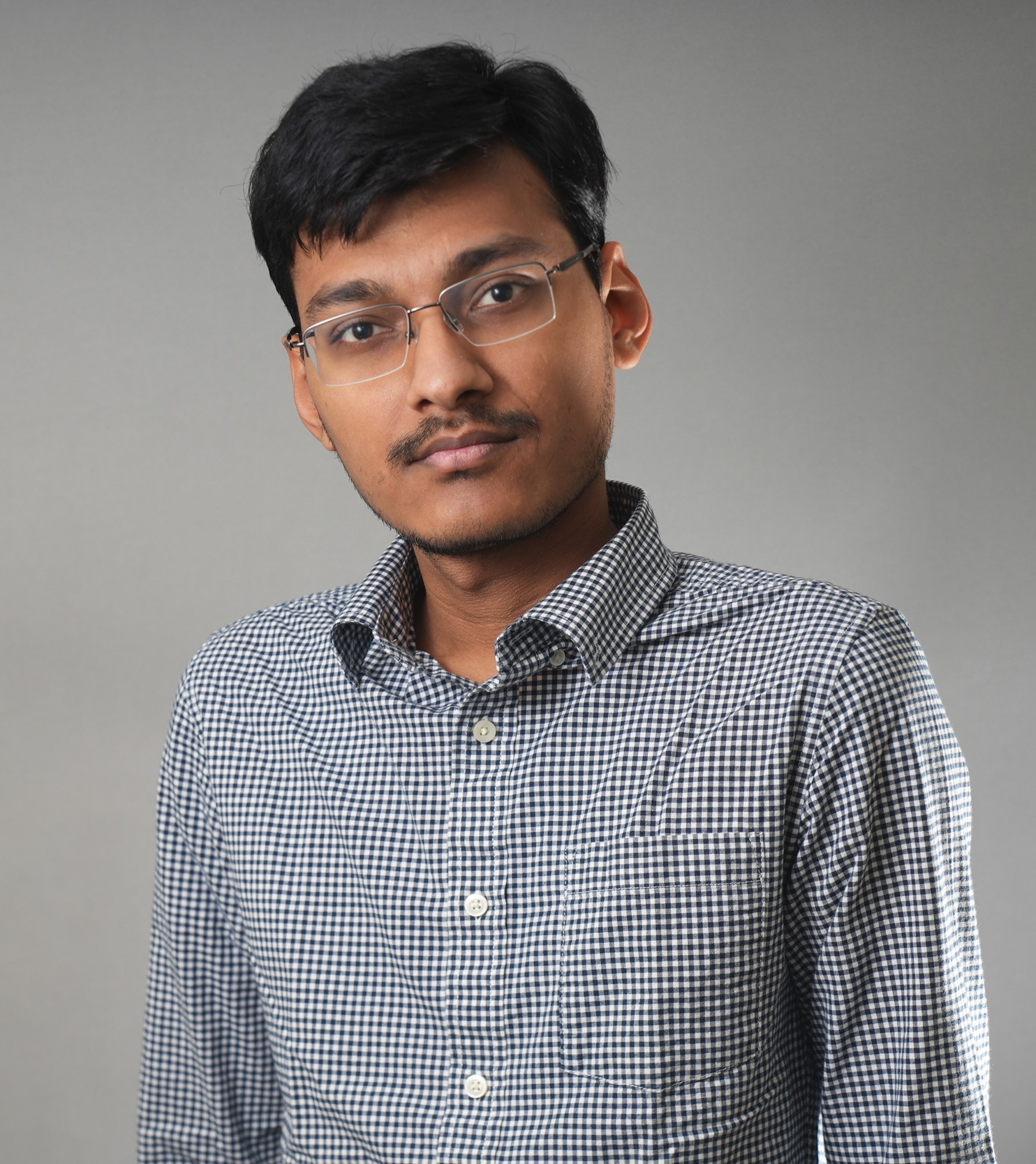}}]{Shubham Aggarwal} (Student Member,~IEEE) received his B.Tech. and M.Tech. in Electrical Engineering from the Indian Institute of Technology (BHU), Varanasi, India. He also completed his M.S. in Mathematics at the University of Illinois Urbana Champaign (UIUC), IL, USA, where he is currently pursuing his Ph.D. in the Coordinated Science Laboratory and the Department of Mechanical Science \& Engineering. His current research interests include wireless communication, optimal control theory, large multi-agent systems, and reinforcement learning.
\end{IEEEbiography}

\begin{IEEEbiography}[{\includegraphics[width=1in,height=1.25in,clip,keepaspectratio]{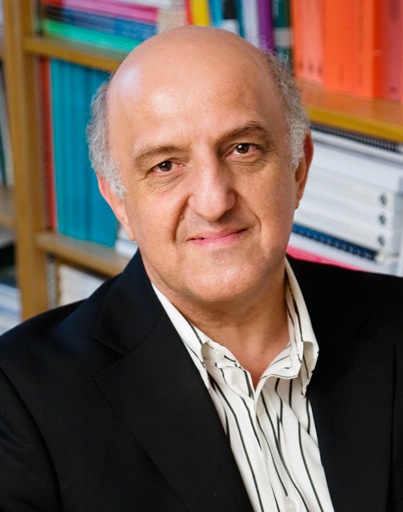}}]{Tamer Ba\c{s}ar} (S’71-M’73-SM’79-F’83-LF’13) has been with the University of Illinois Urbana-Champaign since 1981, where he is currently Swanlund Endowed Chair Emeritus and Center for Advanced Study (CAS) Professor Emeritus of Electrical and Computer Engineering, with also affiliations with the Coordinated Science Laboratory, Information Trust Institute, and Mechanical Science and Engineering. At Illinois, he has also served as Director of CAS (2014-2020), Interim Dean of Engineering (2018), and Interim Director of the Beckman Institute (2008-2010). He received B.S.E.E. from Robert College, Istanbul, and M.S., M.Phil., and Ph.D. from Yale University, from which he received in 2021 the Wilbur Cross Medal. He is a member of the US National Academy of Engineering and a Fellow of the American Academy of Arts and Sciences, as well as Fellow of IEEE, IFAC, and SIAM. He has served as president of IEEE CSS (Control Systems Society), ISDG (International Society of Dynamic Games), and AACC (American Automatic Control Council). He has received several awards and recognitions over the years, including the highest awards of IEEE CSS, IFAC, AACC, and ISDG, the IEEE Control Systems Award, and a number of international honorary doctorates and professorships. He has over 1000 publications in systems, control, communications, optimization, networks, and dynamic games, including books on non-cooperative dynamic game theory, robust control, network security, wireless and communication networks, and stochastic networked control. He was the Editor-in-Chief of Automatica between 2004 and 2014, and is currently editor of several book series. His current research interests include stochastic teams, games, and networks; risk-sensitive estimation and control; mean-field game theory; multi-agent systems and learning; data-driven distributed optimization; epidemics modeling and control over networks; strategic information transmission, spread of disinformation, and deception; security and trust; energy systems; and cyber-physical systems.

\end{IEEEbiography}

\begin{IEEEbiography}[{\includegraphics[width=1in,height=1in,clip,keepaspectratio]{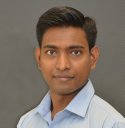}}]{Dipankar Maity} (Senior Member, IEEE) received the B.E. degree in electronics and telecommunication engineering from Jadavpur University, India, in 2013, and the Ph.D. degree in electrical and computer engineering from the University of Maryland, College Park, MD, USA in 2018.
He is an Assistant Professor with the Department of Electrical and Computer Engineering, University of North Carolina at Charlotte, Charlotte, NC, USA. He was a Postdoctoral Fellow with the Georgia Institute of Technology, Atlanta, GA, USA. During his Ph.D., he was a Visiting Scholar at the Technical University of Munich (TUM) and at the KTH Royal Institute of Technology, Stockholm, Sweden. His research interests include temporal logic-based controller synthesis, control under communication constraints, intermittent-feedback control, multi-agent systems, stochastic games, and integration of these ideas in the context of cyber–physical systems.

\end{IEEEbiography}

\end{document}